\documentclass[10pt,journal]{IEEEtran}
\IEEEoverridecommandlockouts
\usepackage[letterpaper, left=0.667in, right=0.667in, bottom=0.597in, top=0.792in]{geometry}
\usepackage{amsfonts,mathptmx}
\usepackage{textcomp}
\usepackage{savesym}

\usepackage{changes}
\savesymbol{comment}

\usepackage{url,cite}
\usepackage{multirow}
\usepackage{mathtools}
\usepackage{hyperref}
\usepackage{comment}
\hyphenation{op-tical net-works semi-conduc-tor}
\usepackage{xcolor}
\usepackage{amsmath,graphicx,latexsym,amssymb,amsthm}
\newcommand{\mR}{\mathbb{R}}

\newcommand{\mc}{\mathcal}

\newcommand{\mE}{\mathbb{E}}

\newcommand{\tha}{\theta}

\renewcommand{\leq}{\leqslant}
\renewcommand{\geq}{\geqslant}

\newtheorem{theo}{Theorem}

\newtheorem{prop}{Proposition}
\newtheorem{Def}{Definition}
\newtheorem{assume}{Assumption}

\newtheorem{lemma}{Lemma}

\usepackage{algpseudocode}
\usepackage{algorithmicx}
\usepackage{algorithm}
\usepackage{caption}
\usepackage{eufrak}

\DeclareMathOperator{\supp}{supp}


\begin{document}

\title{Game-Theoretic Neyman-Pearson Detection to Combat Strategic Evasion}
\author{Yinan Hu, {\it Student Member}, Juntao Chen, {\it Member}, and Quanyan Zhu, {\it Senior Member}
\thanks{Yinan Hu and Quanyan Zhu are with the Department of Electrical and Computer Engineering, New York University, Brooklyn, New York, USA, 11201. Email: \{yh1817,qz494\}@nyu.edu}
\thanks{Juntao Chen is with the Department of Computer and Information Sciences, Fordham University
New York, USA 10023. Email: jchen504@fordham.edu }}
\date{\today}
\maketitle
\begin{abstract}
\noindent The security in networked systems depends greatly on recognizing and identifying adversarial behaviors. Traditional detection methods target specific categories of attacks and have become inadequate for increasingly stealthy and deceptive attacks that are designed to bypass detection strategically. This work proposes game-theoretical frameworks to recognize and combat such evasive attacks. We focus on extending a fundamental class of statistical-based detection methods based on Neyman-Pearson's (NP) hypothesis testing formulation. We capture the conflicting relationship between a strategic evasive attacker and an evasion-aware NP detector. By analyzing both the equilibrium behaviors of the attacker and the NP detector, we characterize their performance using Equilibrium Receiver-Operational-Characteristic (EROC) curves. We show that the evasion-aware NP detectors outperform the non-strategic ones in the way that the former can take advantage of the attacker’s messages to adaptively modify their decision rules to enhance their success rate in detecting  {abnormal} attacks. In addition, we extend our framework to a sequential setting where the user sends out identically distributed messages. We corroborate the analytical results with a case study of intrusion detection evasion problem.  \\
\end{abstract}
\setcounter{page}{1}

\section{Introduction}
Cyber-security is the practice of protecting critical systems and their sensitive information from cyber attacks. There has been a surging proportion of evasion attacks among all cyber attacks over the years. A recent WatchGuard lab report \cite{2021evasion_report} reveals that over 78\% of the malware attacks with encryption are evasive in 2021.  
Defending against evasion attacks has been a crucial concern for networked systems such as computer networks \cite{jain2009security_computer_networks}, Internet of Things (IoT) \cite{tuptuk2018manufacturing_systems}, and autonomous systems \cite{kim2021cybersecurity_autonomous}. 

Many studies \cite{chandola2009anomaly_detection}\cite{kruegel2005automating_mimicry} focused on detecting and mitigating network attacks. 
However, as attackers become more intelligent, deceptive, and stealthy, they are increasingly able to evade detection and move laterally within networks to target critical assets.
These existing countermeasures are often inadequate because they fail to account for evasion as a strategic behavior by attackers. It is highly likely that attackers can learn from updated detection mechanisms and adapt their strategies to further infiltrate networked systems. Therefore, there is a pressing need to fundamentally improve detection methods by developing evasion-aware techniques that proactively strengthen network security. Such methods need to assume that attackers are rational, omniscient, and always acting strategically to evade detection.

{In this work, we aim to build on existing research \cite{hu2022evasion} by developing an evasion-aware detection theory for stealthy attacks. Our focus is on a fundamental class of Neyman-Pearson (NP) hypothesis-testing problems \cite{neyman_pearson1933}. In these scenarios, a user (referred to as ``It") communicates with a detector (referred to as ``She") by sending messages. The detector is aware that the user can be either normal or abnormal, and it is assumed that different types of users generate messages following distinct distributions, reflecting their respective incentives. The detector’s primary goal is to infer the user’s type based on the received message samples. We employ a game-theoretic approach to model the strategic interactions between the detector and an evasive attacker (referred to as ``He"), who possesses complete knowledge of both normal and abnormal behaviors as well as the detection mechanism.}  We model the detection-evasive attacker as an adversary who stands in between the user and the detector, knows the user's true type, intercepts the user's original messages as inputs, and strategically designs outputs sent to the detector to undermine the latter's performance. We formulate sequential game problems to capture the capability and the incentive of the attacker and the interactions with nature and the detector. The nature  first selects a user from the population with a prior distribution on the type. The user then generates messages based on its type. Based on the user's type and the messages, the attacker generates distorted messages and sends them to the detector, who finally determines the user's type based on those distorted messages. 

The analysis of the equilibrium of the game frameworks provides essential insights into fundamental limits on the performance of evasion-aware detectors and their designs. We observe that by varying the intrinsic coefficients parameterizing the detection strategies, one can compute relationships between detection rates and false alarm rates and depict them in terms of an Equilibrium Receiver Operational Characteristics (EROC) curve \cite{kruegel2003bayesian_IDS,lee1998data_IDS,lee1999data_IDS}), which characterizes the fundamental performance tradeoffs of the detector. Furthermore, The analysis of game-theoretic equilibrium also provides numerical algorithms to compute the best-response strategies for both the attacker and the detector.  

We study game models for two kinds of detectors depending on their reasoning capabilities. One is passive, and the other is proactive. The passive one does not anticipate the presence of the attacker and treats the inputs as if they were original, untainted ones from the user. Her decision process based on the inputs is naive, non-strategic and thus her detection performance is subject to the adversarial poisoning of the inputs. In contrast, the proactive detector anticipates that an attacker can manipulate the inputs and checks whether the inputs are consistent with the history and the known priors of the hypotheses. Once she suspects that an attacker exists, she adopts a detection strategy that takes into account the attacker's manipulation. She also takes advantage of the distorted inputs to infer the attacker's private information to mitigate the detrimental effects of poisonous inputs on her detection performance. On the other hand, the attacker decides his distortion strategy via balancing the trade-off between undermining the detector's performance and lowering the penalty resulting from manipulation of messages.   

The passive NP detector game is modeled by the Stackelberg game \cite{bacsar1998dynamic_game}. The attacker is the leader in the game who determines a distortion strategy while {the passive NP detector is the follower who naively applies classical NP hypothesis testing method for detection without knowing the distortion.} The analysis of the Stackelberg equilibrium shows that the detection rate of a passive NP detector is reduced exponentially with respect to the distortion level, compared to the attacker-free detection scenario. The attacker's distortion policy depends on the user's type: when the user is  {normal}, he does not mutate the original message; when the user is  {abnormal}, he manipulates the message by suppressing the probability that outgoing message lies in the region where the detector can make a detection. By depicting the EROC curve, we observe that in general the passive NP detector's performance is worse than the one of non-adversarial detector yet also dependent on the penalization: when the attacker's penalty goes to infinity, the attacker conveys the message truthfully, and the detector applies classical NP hypothesis testing to determine the user's type; in contrast, when the penalty vanishes, the detection rate reduces to zero as the attacker can distort the user's messages in any desired way.      

The proactive NP detector game is modeled using a signaling game framework \cite{cho1987signaling_game_stable_equil}. The attacker is the sender who can observe the underlying true hypothesis and its associated distribution of messages. The detector is the receiver who does not know the true hypothesis and makes a decision based on the message sent by the attacker who aims to mislead the decision. The detector is aware of the possible presence of the attacker and proactively infers true hypothesis using a Bayesian approach. The analysis of the equilibrium shows that the attacker's policies change not only the probability distributions of messages at different regions but the detector's regions of rejection as well. The attacker enlarges the proactive detector's uncertainty area significantly by interpolating the two original distributions followed by a scaling factor to make them less distinguishable. 
{The EROC curves show that the proactive NP detector outperforms the passive NP detector by leveraging the attacker's strategies.}

The results from the game-theoretic models have strong implications in building the next-generation evasion-aware detection systems. In particular, when the user is  {abnormal}, the attacker's strategy at equilibrium against a proactive detector has a stronger amplitude in the detector's region of rejection than the counterpart against a passive detector. So an evasion-aware detection system outperforms a classical counterpart in the way that the former can predict the regions of messages leading to a decision of  {abnormal} user and alter its decision rule so as to mitigate the attacker's distortion on the region of rejection, improving the chance of catching attacks that are supposed to be evasive. We investigate a case study of an anomaly detection system where an attacker distorts a user's event history to compromise the detection result. We use this study to corroborate the results and demonstrate the capability of evasion-aware detection.

Our contributions are two-fold. First, we develop a holistic theory for evasion-aware detection. In particular, the proposed framework is solidly built on classical hypothesis testing and game-theoretic theories and characterizes the fundamental performance tradeoffs for passive and proactive detectors. Second, evasion-aware detection policies obtained from the analysis of sequential games provide a design paradigm for next-generation detection systems. Such techniques provide a novel way for the detector to process the poisonous inputs while capable of maintaining a high detection performance.  

The rest of this work is organized as follows. 
In Section \ref{sec:stackelberg_game}, 
we present the Stackelberg game formulation of a passive detector, who is na\"ive about the distortion of messages by an attacker-in-the-middle.  
In Section \ref{sec:signalling}, we discuss the signaling game formulation of a proactive detector who anticipates adversarial behaviors. We model the relationship between the attacker-in-the-middle and the proactive detector using signaling games. The attacker applies a decision rule based on her posterior belief on the user's type after receiving distorted messages. In section \ref{sec:repeat_obs} we discuss performances of detectors under repeated observations.  In Section \ref{sec: numerical_experiment} we apply the results to a case study of an  IDS evasion problem. Finally, we conclude in Section \ref{sec:conclusion}.

\subsection{Related Works}

Many studies have contributed to developing stealthy attacks and their countermeasures in networked systems \cite{etesami2019dynamic_game_cps_review,mao2020stealthy_attack_ncs, pasqualetti2013attack_cps_detection}. One example is false data injections \cite{ashfaq2013ADS_evasion_margin}, where an attacker can inject stealthy data into the original traffic in the networked system. Another example is deception attacks \cite{Kang2017stealthy_deception_CPS}, where the attacker sends falsified data that the detector consider genuine and untainted. The concept of developing our proposed frameworks can be adapted to develop detectors combatting strategically stealthy attacks of any categories. 

Game-theoretic formulations have been widely used to capture the attacker-defender interactions in cyber-security \cite{van2013_stealthy_take_over,  manshaei2013game_network_security,zhu2015game_CPS,kiennert2018game_IDS,basar2016dynamic_cps}. A recent survey investigates how game-theoretic approaches are applied in intrusion detection systems. Authors in \cite{zhu2015game_CPS} introduce game-theoretic frameworks in formulating the tradeoffs among robustness, security, and resilience for cyber-physical control systems. {The authors in \cite{xiao2018prospect_theory_detection_APT} consolidate cumulative prospect theory into the game theory to model the complex interactions between cyber systems and advanced persistent threat (APT) attackers. They uncover the dynamics of risk-seeking and risk-averse behaviors by examining how subjective perceptions influence choices regarding attack and scan intervals.} Works in \cite{chen2009game_IDS_heter} study the IDS with multiple attackers and heterogeneous defenders in a network by formulating their relationships into a game, where their different payoff functions capture the heterogeneity among defenders. This work studies the attacker-defender relationship using a formulation combining the formulations of game theories and hypothesis testing similar to \cite{saritacs2019hypothesis_game}, yet we adopt more involved utility functions. 

Machine learning techniques have also been widely applied in detection and prediction problems \cite{basar2010network_security}. The main difference between detection by binary hypothesis testing and detection by machine learning method is that the former is an inference method, while the latter is a prediction method.
The security of machine learning has also been a research spotlight  \cite{dalvi2004adversarial_classification,huang2011adv_ML,barreno2006adv_ML_start,biggio2013evasion_attack_ML} for the past several decades. Evasion attacks on machine learning mainly include modifying the training or testing data, obfuscating the gradient descent steps in the optimization process, etc.  Authors in \cite{baras2006classifier_security} discuss the optimal evasion attacks upon the gradients at testing time in classification problems. 

  {Authors in \cite{bruckner2012static_adversarial_game} formulate an adversarial learning scenario where a data generator selects data samples and a learner selects the optimal parameters for the learning model into a static Nash prediction game, in which the uniqueness of the equilibrium is proved. They later propose a Stackelberg game formulation to characterize the scenario where testing data are strategically generated in response to the predictive model\cite{bruckner2011stackelberg_prediction}, which is applicable in email spam filtering.  
Authors in \cite{barreno2010security_machine_learning} discuss the vulnerability in machine learning techniques and categorize several classes of attacks upon machine learning systems along three axes: in terms of the influence, there are causative attacks or exploratory attacks; in terms of security violation, there are integrity attacks or availability attacks; in terms of specificity, there could be targeted attacks or indiscriminate attacks. Defensive measures are discussed under each category of attacks. The proposed framework extends hypothesis testing formulations and thus is model-based, while the machine-based detection techniques are data-driven.}
{The key advantage of a model-based approach is its reliance on a well-established theoretical framework, offering better interpretability and robustness. Unlike data-driven methods, which require large datasets and may struggle with generalization, model-based methods are more effective when data is limited or when system behavior can be mathematically modeled.}



\section{Problem Formulation of Passive Detectors}
\label{sec:stackelberg_game}
In this section, we present the problem formulation of a passive detector. As a comparison, we first consider the non-adversarial scenario that involves a user (It), a detector (She), and a communicating channel without an attacker in the middle (He). There are two types of the user:  {normal} and  {abnormal}. A user is sampled from the population with a known prior distribution of the two types of users. The sampled user knows its own type. Different types of users generate messages obeying distinct distributions. The detector does not know the user's type, but she can infer the type using the user's messages using hypothesis testing.

Let $\langle M,\mc{M}\rangle$ be a measurable Polish space endowed with $F_0,F_1\in \Delta(M)$ as two measures of
interest, where $\Delta(M)$ refers to the set of all measures supported on $M$. Denote $f_0,f_1\in L^1(M)$ as the probability density functions (strictly speaking, they are Radon-Nikodym derivatives of $F_0,F_1$ with respect to the Lebesgue measure) corresponding to the measures $F_0,F_1$. Suppose a {normal} user and an {abnormal} user generate messages obeying the distributions of $f_0,f_1$, respectively. Denote $\tilde{m}'$ as the random variable of the user's messages whose realizations are supported on $M$. Denote $\mc{H} = \{H_0,H_1\}$ as the space of user's types. Let $H_0$ be the hypothesis that the user is  {normal} and $H_1$ be the one that the user is  {abnormal}. Under different hypotheses, the messages are generated using $f_0$ and $f_1$, respectively:  
\begin{equation}
    H_0:\;\tilde{m}'\sim f_0(m'),\;\longleftrightarrow H_1:\tilde{m}'\sim f_1(m'). 
    \label{def:hypos}
\end{equation}

Let $\Gamma$ be the set of all decision rules of the detector, and a typical decision rule is denoted as $\delta\in\Gamma,\;\delta:M\rightarrow \{0,1\}$, where $\delta(m')=i,\;i\in\{0,1\},$ means that the detector determines the user is of type $H_i$. We suppose that the detector makes decisions using NP hypothesis testing method \cite{lehmann2006testing_hypothesis}; i.e., the detector arrives at an optimal policy $\delta^*$ by solving the following optimization problem:
\begin{equation}
\begin{aligned}
    \underset{{\delta} \in \Gamma}{\max}  \int_{\delta(m')=1}{f_1(m')dm'},\;\text{s.t.}\;\int_{\delta(m')=1}{f_0(m')dm'} \leq \alpha.
\end{aligned}
\end{equation}

The formulation of Neyman-Pearson hypothesis testing stresses the asymmetry of risks between false positive error and false negative error. The meaning of these types of errors may vary depending on the meaning of hypotheses $H_0$, $H_1$, in which here we assume that $H_1$ is for abnormal users. Thus, the formulation means that the detector aims to minimize the misdetection rate while constraining the false alarm rate.  However, in some other scenarios, such as vulnerability detection in web services \cite{Antunes2015web_service_vulnerability}, failing to catch an abnormality is more serious than a false alarm. Thus, the detector should aim to constrain the misdetection while minimizing the false alarm rate. In that case, we may as well denote $H_0$ as abnormal traffic and $H_1$ as normal traffic.  

\subsection{The Threat Model}
{
 In an adversarial scenario, the attacker aims to undermine the detector's detection rate as much as possible.  The attacker knows whether the user is normal or abnormal and has full knowledge of the detector's operations. Specifically, the attacker is aware that the detector employs the Neyman-Pearson hypothesis testing method to classify the user's type based on the received message.
The attacker intercepts the original messages from the user, manipulates them, and sends the distorted messages to the detector.  Upon receiving the message, the detector determines whether it originates from a normal or abnormal user. The passive detector, however, is non-strategic and unaware of the attacker's presence. }

Mathematically, we denote the user's original message as $m'$. Based on the user's type $H_i,\;i\in\{0,1\}$, the attacker manipulates the message through a mapping $\mu_i: M\rightarrow M,\;m'\longmapsto m$ with $m = \mu_i(m')$, where $m$ refers to the distorted message received by the detector (see the blue arrows in Figure \ref{fig:naive_detector}). Finally, the detector applies NP hypothesis testing \cite{neyman_pearson1933} to $m$ with distributions $f_1,f_0$ to determine the true type of the user. We capture the interaction between the attacker and a passive detector using a Stackelberg game \cite{bacsar1998dynamic_game}. 
\begin{figure}
    \centering
    \includegraphics[scale=0.28]{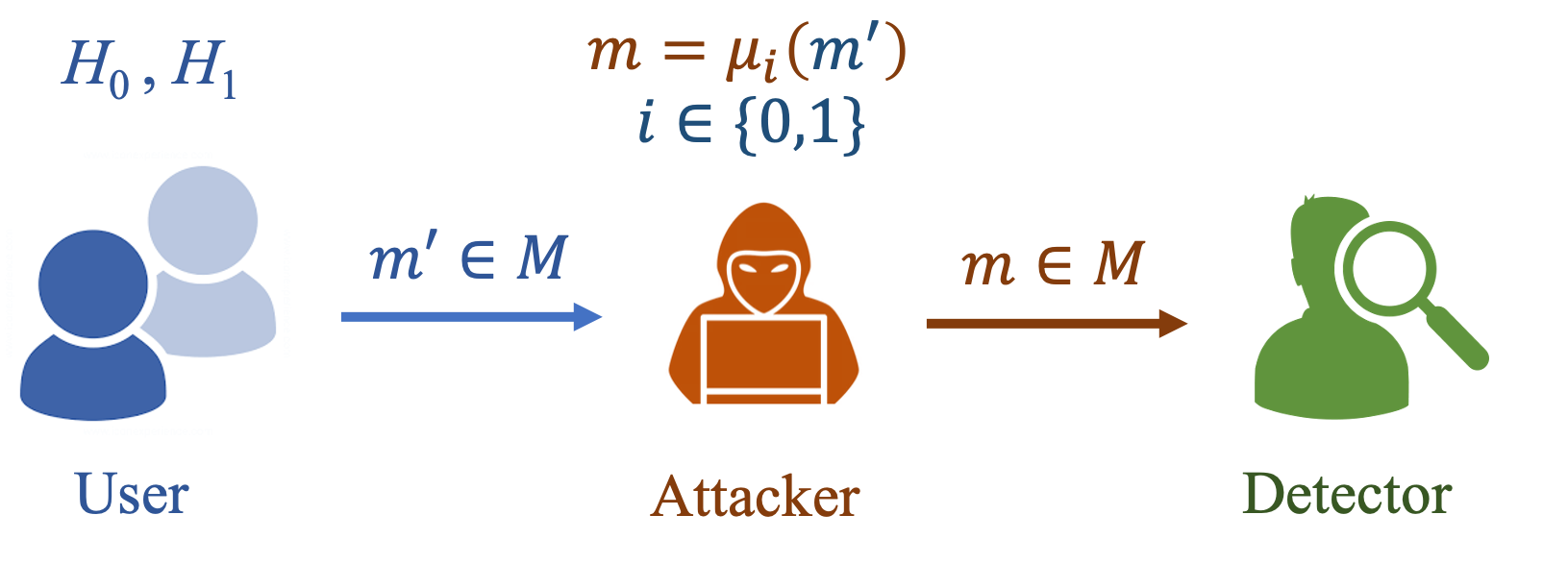}
    \caption{The adversarial detection with an attacker in the middle.  The nature first draws a message $m'\in M$ based on a randomly chosen hypothesis from either the null $H_0$ or the alternative $H_1$ hypothesis. It can be interpreted as the scenario where a user of known type (e.g.,  {normal} or  {abnormal}) is drawn from a prior distribution. An attacker observes the type or the hypothesis $i\in\{0,1\}$ and the associated message $m'$. The attacker chooses a strategy $\mu_i$ to distort the message and send $m=\mu_i(m')$ to the detector with the goal to mislead the detection result. The detector relies on her knowledge and designs a passive or proactive detection strategy to optimize her detection performance. 
    }
    \label{fig:naive_detector}
\end{figure}

\subsection{Detector's Decision Rule}
Let $\hat{\delta}\in\Gamma,\;\hat{\delta}: M \rightarrow \{0,1\}$ be a generic strategy of a passive detector, where $\hat{\delta}(m) = j,\;j\in\{0,1\}$ means that she thinks the user is of type $H_j\in\mc{H}$ based on the message $m$. 
Similar to the non-adversarial scenario, the detector arrives at her optimal strategy $\hat{\delta}^*$ by solving the following optimization problem:
\begin{equation}
\begin{aligned}
    &\underset{\hat{\delta} \in \Gamma}{\max}  \int_{\hat{\delta}(m)=1}{f_1(m)dm},\\
    &\text{s.t.}\;\int_{\hat{\delta}(m)=1}{f_0(m)dm} \leq \alpha, 
\end{aligned}
\end{equation}
where $\int_{\hat{\delta}(m)=1}{f_1(m)dm}$ and $\int_{\hat{\delta}(m)=1}{f_0(m)dm}$ refer to the `conjectural detection rate' and the `conjectural false alarm rate' from the receiver's viewpoint. 
 During the gameplay, the attacker-in-the-middle distorts the message via $\mu_0$ or $\mu_1$, depending on the observed user's type. Notice that we can regard the distortion mappings equivalently as the transformation from the original measures of interest $F_0,F_1$ to the final measures $\Sigma_0,\Sigma_1$. We argue via the following lemma from  \cite{villani2009ot_thick} that it is also equivalent to formulate the distortion using the resulting measures $\Sigma_0,\Sigma_1$:  
\begin{lemma}
Let $\langle M,\mc{M} \rangle$ be a Polish probability space endowed with atomless measures $F_0,F_1\in\Delta(M)$. Then, for any atomless probability measures $\hat{\Sigma}_0,\hat{\Sigma}_1\in \Delta(M)$, there exist measurable isomorphisms $\mu_0,\mu_1: M\rightarrow M$ such that 
\begin{equation}
\begin{aligned}
  \mu_0 \# F_0 = \Sigma_0,\;\;\mu_1 \# F_1 = \Sigma_1, \;  \mu^{-1}_0\#\Sigma_0 = F_0,\;\;\mu^{-1}_1 \# \Sigma_1 = F_1.
    \label{pushforward}
\end{aligned}
\end{equation}
\label{lemma:mu_sigma_equiv}
In other words, the measures $\Sigma_0,\Sigma_1$ are push-forward measures \cite{folland1999real_analysis} of the measures $F_0,F_1$, respectively.
\end{lemma}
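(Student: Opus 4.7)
The plan is to reduce the statement to the classical measure isomorphism theorem for atomless standard Borel probability spaces, which asserts that any two such spaces are Borel-isomorphic in a measure-preserving way. I would exhibit $\mu_0$ (and analogously $\mu_1$) as a composition of two such isomorphisms: one carrying $(M, F_0)$ onto a canonical model, and one carrying the canonical model back onto $(M, \Sigma_0)$.

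First, since $\langle M, \mc{M}\rangle$ is Polish and $F_0$ is atomless, I invoke the fact that there exists a measurable isomorphism $\phi_0 : M \to [0,1]$ such that $\phi_0 \# F_0 = \lam$, where $\lam$ denotes Lebesgue measure on $[0,1]$. Applying the same result to $\Sigma_0$, which is atomless by hypothesis, I obtain a second measurable isomorphism $\psi_0 : M \to [0,1]$ with $\psi_0 \# \Sigma_0 = \lam$. Setting $\mu_0 := \psi_0^{-1} \circ \phi_0$, this composition is itself a measurable isomorphism $M \to M$. Using the functoriality of the pushforward, $\mu_0 \# F_0 = \psi_0^{-1} \# (\phi_0 \# F_0) = \psi_0^{-1} \# \lam = \Sigma_0$, which gives the first identity in \eqref{pushforward}. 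The inverse identity $\mu_0^{-1} \# \Sigma_0 = F_0$ is then immediate, since $\mu_0$ is a bijection with measurable inverse. The construction of $\mu_1$ is identical with $F_0, \Sigma_0$ replaced by $F_1, \Sigma_1$.

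The main obstacle is justifying the existence of the measure-preserving Borel isomorphism onto $([0,1], \lam)$, which is where essentially all of the work lives. A self-contained argument would combine Kuratowski's isomorphism theorem (every uncountable Polish space is Borel-isomorphic to $[0,1]$) with a cumulative-distribution-function construction: once $F_0$ has been transported to an atomless Borel probability measure $\tilde{F}_0$ on $[0,1]$, the map $t \mapsto \tilde{F}_0([0,t])$ is continuous and non-decreasing because $\tilde{F}_0$ is atomless, and one modifies it on a null set to produce a Borel bijection pushing $\tilde{F}_0$ forward to $\lam$. Rather than reconstructing this argument, I would simply cite it from Villani's \emph{Optimal Transport: Old and New} or Bogachev's \emph{Measure Theory}, which is consistent with the reference already present in the statement.
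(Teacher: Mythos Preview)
Your proposal is correct and in fact supplies more detail than the paper itself: the paper does not prove this lemma at all but simply states it as a result taken from Villani's \emph{Optimal Transport} (the citation \texttt{villani2009ot\_thick}). Your reduction to the standard isomorphism theorem for atomless standard Borel probability spaces, followed by the composition $\mu_0 = \psi_0^{-1}\circ\phi_0$, is exactly the canonical argument behind that cited result, so there is no discrepancy in approach.
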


Lemma \ref{lemma:mu_sigma_equiv} states that characterizing the attacker's strategies by $\hat{\mu}_1,\hat{\mu}_0$ is equivalent to characterizing them by the measures of distorted messages $\hat{\Sigma}_1,\hat{\Sigma}_0$, the latter of which is equivalent to the probability distributions $\hat{\sigma}_1,\hat{\sigma}_0$ on $M$. Thus, we define the attacker's action space $\hat{A}_S$ as follows:
\begin{equation}
\begin{aligned}
   \hat{A}_S := &\Big\{\begin{bmatrix}
    \hat{\sigma}_0 \\ \hat{\sigma}_1
    \end{bmatrix}\in {L}^1(M)^{\otimes 2}:
    i\in\{0,1\},\; \int_{m\in M}{\hat{\sigma}_i(m)dm} = 1; \\
    &\;\forall m\in M,\hat{\sigma}_i(m)\geq 0\Big\}.
    \label{eq:strategy_space_sender}
\end{aligned}
\end{equation}

The attacker has two objectives. On one hand, he seeks to undermine the detector's performance; on the other hand, he aims at lowering the `lying cost' from the distortion of messages. To evaluate the detector's performance, we introduce the following detection rate $\hat{P}_D: \Gamma \rightarrow [0,1]$ and the false alarm rate $\hat{P}_F:\;\Gamma \rightarrow [0,1]$:
\begin{align}
    \hat{P}_D(\hat{\delta}) &= \int_{\hat{\delta}(m)=1}{\hat{\sigma}_1(m)dm},\;\;  \label{detection_rate_stackelberg}\\ \hat{P}_F(\hat{\delta}) &= \int_{\hat{\delta}(m)=1}{\hat{\sigma}_0(m)dm}.
     \label{false_alarm_rate_stackelberg}
\end{align}
The attacker also incurs a `lying cost' \cite{kartik2009lying_cost} for the discrepancy between the distorted and the original distribution. Such a lying cost results from anomaly detection techniques in security \cite{ali2015intrusion_detection_evasion}.  
Considering the KL divergence as the penalty function,  we formulate the attacker's optimization problems as follows. For $j\in\{0,1\}$ and $(\sigma_0,\sigma_1)\in\bar{A}_S$, 
\begin{equation}
\begin{aligned}
\underset{{\sigma_j}}{\min}&\;{\bar{C}_S(H_j,\sigma_1,\sigma_0,\delta^*)} \\
   \Leftrightarrow \underset{\substack{\sigma_j}}{\min}& \int_{\delta^*(m)=1}{{\sigma_1(m)}dm} + \lambda \int_{M}{{\sigma_1(m)\ln\frac{\sigma_j(m)}{f_j(m)}}dm}.
\label{eq:signalling_intruder2_1_2}
\end{aligned}
\end{equation}
To solve \eqref{eq:signalling_intruder2_1_2}, the concept of game potentials\cite{lloyd_1996potential_game} is helpful. A game potential $\Xi:\bar{A}_S\rightarrow \mR$ refers to a function satisfying $\Xi(\sigma_i,\sigma_{-i})-\Xi(\sigma'_j,\sigma_{-j})>0 $ if and only if $\bar{C}_S(H_j,\sigma'_i,\sigma_{-i},\delta^*) -\bar{C}_S(H_j,\sigma'_j,\sigma_{-i},\delta^*)>0$ for all $\sigma'_i,\sigma'_j$ $i,j\in \{0,1\}$.
If we can identify such a function, we can solve the game by optimizing the game potential across all possible strategy profiles of the players. 
By observation we can formulate the attacker's optimization problem as follows:
\begin{equation}
\begin{aligned}   &\underset{{(\hat{\sigma}_1,\hat{\sigma}_0)\in \hat{A}_S}}{\min}\; \int_{\hat{\delta}^*(m)=1}{\hat{\sigma}_1(m)dm} \\
    &+ \lambda \int_{m \in {M}}{{\hat{\sigma}_1(m)\ln\frac{\hat{\sigma}_1(m)}{f_1(m)}}dm} + \lambda \int_{m \in {M}}{{\hat{\sigma}_0(m)\ln\frac{\hat{\sigma}_0(m)}{f_0(m)}}dm}.
    \label{eq:stackelberg_attacker_potential}
\end{aligned}
\end{equation}

\paragraph{The optimal strategy of a passive detector}
The passive detector faces the optimization problem \eqref{eq:stackelberg_attacker_potential}. Using the method of Lagrange multipliers \cite{levy2008principles_signal_detection} we can obtain the detector's optimal decision rule $\hat{\delta}^*$ as follows:
\begin{equation}
    \forall m\in M,\;\;\hat{\delta}^*(m) = \begin{cases}
    1 & \;f_1(m)> F^{-1}_L(\alpha)f_0(m), \\
    r\in [0,1] & f_1(m)= F^{-1}_L(\alpha)f_0(m), \\
    0 & \mbox{otherwise},
    \end{cases}
    \label{eq:receiver_naive_np}
\end{equation}
where $F_{L}(\cdot)$ refers to the cumulative probability density function of the ratio of the random variable obeying distribution $f_0$, and, correspondingly, $F^{-1}_{L}(\alpha)$ is the quantile function at the probability $\alpha$.  
Based on the detector's decisions \eqref{eq:receiver_naive_np} upon every $m\in M$, we can partition the message space $M$ into region of rejection, region of acceptance, and region of uncertainty as follows: 
\begin{align}
    \hat{M}_1 &= \{m:\;f_1(m)>F^{-1}_L(\alpha)f_0(m)\}= \{m:\hat{\delta}^*(m)=1\},
    \label{def:M1} \\
     \hat{M}_0 &= \{m:\;f_1(m)<F^{-1}_L(\alpha)f_0(m)\} = \{m:\hat{\delta}^*(m)=0\},
    \label{def:M0} \\
     \hat{M}_* &= \{m:\;f_1(m) =F^{-1}_L(\alpha)f_0(m)\} = \{m:\hat{\delta}^*(m)\in [0,1]\}.\label{def:M_}
\end{align}

\paragraph{The attacker's equilibrium strategies}
We now obtain attacker's equilibrium strategies based on the detector's optimal decision $\hat{\delta}^*$ by using first-order condition on \eqref{eq:stackelberg_attacker_potential}  .
We summarize the results in the following theorem \ref{theo:stackelberg_attacker1}.  
\begin{theo}
Let $f_0,f_1$ be the distributions of the original messages in \eqref{def:hypos}.  Let $M$ be the message space specified in lemma \ref{lemma:mu_sigma_equiv} and  $\hat{M}_1, \hat{M}_0$ be two regions of messages specified in \eqref{def:M1} and \eqref{def:M0}, respectively.
Then, we can write the optimal solution $\hat{\sigma}^*_1,\hat{\sigma}^*_0$ of the attacker's optimization problem \eqref{eq:stackelberg_attacker_potential} as follows: 
\begin{align}
\hat{\sigma}^*_0(m) &= f_0(m),\;\forall m\in M,
\label{sol:stackelberg_attacker0}
    \\
\hat{\sigma}^*_1(m) & = \begin{cases}
     \frac{f_1(m)}{\int_{m\in \hat{M}_1}{e^{-\frac{1}{\lambda}}f_1(m)dm} + \int_{m\in \hat{M}_0}{{f_1(m)}dm}},\;\;& m\in \hat{M}_0, \\
     \frac{f_1(m)e^{-\frac{1 }{\lambda}}}{\int_{m\in \hat{M}_1}{{e^{-\frac{1}{\lambda}}f_1(m)}dm} + \int_{m\in \hat{M}_0}{{f_1(m)}dm}},& m\in \hat{M}_1.
    \end{cases}
    \label{sol:stackelberg_attacker1}
\end{align}
\label{theo:stackelberg_attacker1}
\end{theo}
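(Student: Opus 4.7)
The plan is to exploit the fact that the attacker's objective in \eqref{eq:stackelberg_attacker_potential} decouples into a sum of two functionals, one depending only on $\hat{\sigma}_0$ and one depending only on $\hat{\sigma}_1$, so the minimization can be carried out separately for each density subject to its own normalization constraint from $\hat{A}_S$. For $\hat{\sigma}_0$, the only term that remains is the relative entropy $\int_M \hat{\sigma}_0(m)\ln(\hat{\sigma}_0(m)/f_0(m))\,dm$, which by Gibbs' inequality is non-negative and vanishes if and only if $\hat{\sigma}_0 = f_0$ almost everywhere. This immediately yields \eqref{sol:stackelberg_attacker0} without needing any Lagrangian machinery, so the work concentrates entirely on $\hat{\sigma}_1$.

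For $\hat{\sigma}_1$, the problem is to minimize $\int_{\hat{M}_1}\hat{\sigma}_1(m)\,dm + \lambda\int_M \hat{\sigma}_1(m)\ln(\hat{\sigma}_1(m)/f_1(m))\,dm$ subject to $\int_M \hat{\sigma}_1(m)\,dm = 1$ and $\hat{\sigma}_1 \geq 0$. I would first observe that the set $\hat{M}_*$ defined in \eqref{def:M_} has Lebesgue measure zero under a mild genericity condition on $f_0,f_1$, so that $M = \hat{M}_0 \sqcup \hat{M}_1$ up to a null set and the first term can be rewritten as $\int_M \mathbf{1}_{\hat{M}_1}(m)\hat{\sigma}_1(m)\,dm$. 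Introducing a single Lagrange multiplier $\nu$ for the normalization constraint, the pointwise first-order condition reads $\mathbf{1}_{\hat{M}_1}(m) + \lambda\bigl(\ln(\hat{\sigma}_1(m)/f_1(m)) + 1\bigr) + \nu = 0$, which I would solve separately on $\hat{M}_0$ and on $\hat{M}_1$ to obtain $\hat{\sigma}_1(m) = C f_1(m)$ on $\hat{M}_0$ and $\hat{\sigma}_1(m) = C f_1(m) e^{-1/\lambda}$ on $\hat{M}_1$, where $C = e^{-\nu/\lambda - 1} > 0$. Non-negativity is then automatic.

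The constant $C$ is pinned down by enforcing $\int_M \hat{\sigma}_1(m)\,dm = 1$, which gives
\begin{equation}
C = \frac{1}{\int_{\hat{M}_0} f_1(m)\,dm + e^{-1/\lambda}\int_{\hat{M}_1} f_1(m)\,dm},
\end{equation}
matching the denominators in \eqref{sol:stackelberg_attacker1}. To upgrade the stationary point to a global minimum, I would invoke strict convexity: the relative entropy $\hat{\sigma}_1 \mapsto \int \hat{\sigma}_1\ln(\hat{\sigma}_1/f_1)$ is strictly convex in $\hat{\sigma}_1$, the linear term $\int_{\hat{M}_1}\hat{\sigma}_1$ preserves convexity, and the constraint set is convex, so the unique critical point is the unique minimizer.

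The main obstacle I anticipate is the rigorous justification of the pointwise first-order condition in this infinite-dimensional setting, since the objective is only densely defined on $L^1(M)$ and the entropy blows up when $\hat{\sigma}_1$ is not absolutely continuous with respect to $f_1$. I would address this by restricting the admissible set to densities with finite relative entropy (outside this set the objective is $+\infty$, so the minimizer lives inside), and by using a standard variational argument: for any admissible perturbation $\eta$ with $\int \eta = 0$, the directional derivative of the objective at the candidate must vanish, which reproduces the pointwise Euler-Lagrange equation above. The measure-zero treatment of $\hat{M}_*$ and verification that $\hat{\sigma}_1^*$ indeed has finite entropy (clear since $f_1$ does and the two pieces differ from $f_1$ by a bounded multiplicative factor) complete the argument.
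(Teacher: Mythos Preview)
Your proposal is correct and follows essentially the same route the paper indicates: the paper simply states that the attacker's equilibrium strategies are obtained ``by using first-order condition on \eqref{eq:stackelberg_attacker_potential}'' and gives no further detail, so your decoupling of the objective, Gibbs' inequality for $\hat{\sigma}_0$, and Lagrange-multiplier computation for $\hat{\sigma}_1$ are exactly the intended argument, fleshed out with the convexity and measure-zero remarks that the paper omits.
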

We remark that when $\lambda = 0$, the attacker's equilibrium strategies \eqref{sol:stackelberg_attacker0}\eqref{sol:stackelberg_attacker1} reduce to
\begin{align}
     &\hat{\sigma}^*_1(m) = \begin{cases}
     \frac{f_1(m)}{\int_{m\in M_0}{{f_1(m)}dm}} & m\in \hat{M}_0,\\
     0 & m\in \hat{M}_1,
     \end{cases}  \\
     & \hat{\sigma}^*_0(m) = f_0(m),\;\forall m\in M,
\end{align}
which shows that the attacker `moves' the samples in $M_1$ and `redistributes' them over the region $M_0$. The probability of detection is $0$ if  detector applies classical NP testing method;
on the other hand, when $\lambda \rightarrow \infty$, the attacker's equilibrium strategies  reduce to $\hat{\sigma}^*_1(m) = f_1(m),\;\hat{\sigma}^*_0(m) = f_0(m),\;\forall m\in M$, meaning the attacker cannot `lie' about the message due to infinitely high penalty for distorting the messages at equilibrium. From the perspective of the detector, the Stackelberg game model reduces to the classical NP hypothesis problem.




\section{The Formulation of a Proactive Detector}
\label{sec:signalling}
In this section, we assume that the detector is aware of the attacker-in-the-middle who intercepts and sends out distorted messages to her (see Figure \ref{fig:naive_detector}).  
Once the detector becomes aware that the messages have been manipulated, she can respond strategically to mitigate the reduced detection accuracy caused by the distortion. Conversely, knowing the detector is proactive, the attacker is less able to manipulate messages as extensively as they could with a passive detector. These interactions and dynamics can be effectively modeled using the signaling game framework, which we introduce below.

\subsection{The Signaling Game Formulation}
  
\begin{figure}
    \centering    \includegraphics[scale=0.28]{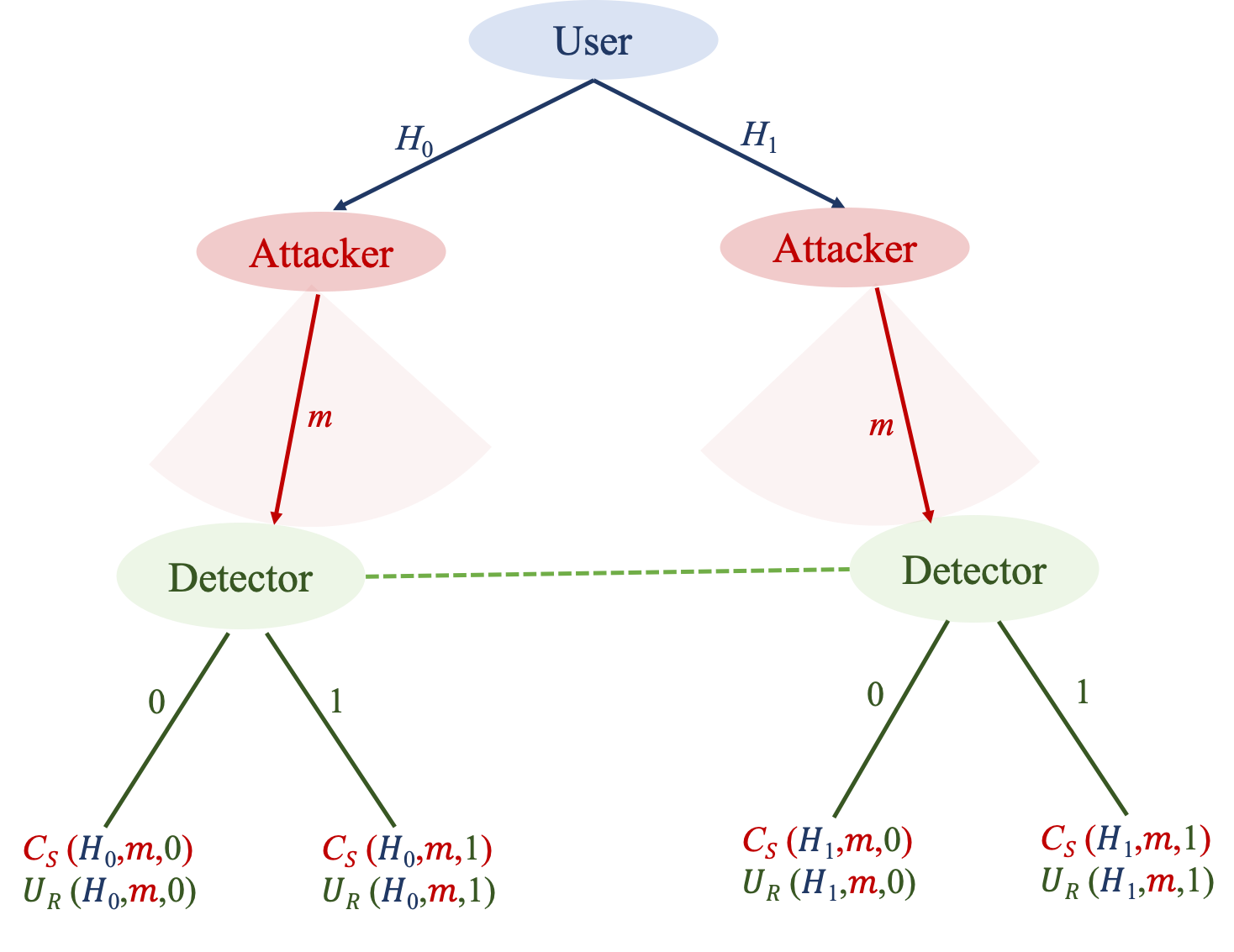}
    \caption{The game tree of the signaling game $\mc{G}_2$ capturing the interactions between the attacker and the proactive detector. The attacker observes the type (i.e., $H_0$ or $H_1$), receives a sample from the user, and sends strategically distorted message $m$ to the detector. Finally, the detector makes a decision $i\in\{0,1\}$ based on the manipulated message and the posterior belief. The payoffs for each outcome of the game are depicted at the bottom of the  game tree.}
    \label{fig:htg}
\end{figure}

We describe the gameplay between a proactive detector and an attacker. Let $p(H_0), p(H_1)$ be the common prior of the user's true type (private information).  When the game starts, the \textit{attacker} observes the user's type and receives a message $m'\in M$ from it. Based on the user's type $H_i,\;i\in\{0,1\}$, the attacker strategically generates a piece of distorted message $m = \mu_i(m')$ and sends it to the detector. 
The detector determines a detection rule ${\delta}: M\rightarrow \{0,1\}, \delta\in\Gamma,$ based on $m$ and the posterior belief, where $\Gamma$ is the set of admissible detection rules. We let $a = \delta(m),\;a \in \bar{A}_R = \{0,1\}$ be a generic action of the detector for an input $m\in M$. Finally, the detector receives a reward $U_R:\mc{H}\times M \times \bar{A}_R\rightarrow \mR$ and the attacker receives a cost $C_S:\mc{H}\times M \times \bar{A}_R\rightarrow \mR$ based on the user's true type $H_i$, the message $m$, and the detector's decision $\delta(m) = j\in\{0,1\}$. The gameplay is summarized using the game tree depicted in Figure \ref{fig:htg}. Following the similar arguments used in Section \ref{sec:stackelberg_game}, we formulate the attacker's strategy space $\bar{A}_S$ in terms of the resulting measures of the distorted messages: 
\begin{equation}
\begin{aligned}
    \bar{A}_S &= \Big\{\begin{bmatrix}
    \sigma_0 \\ \sigma_1
    \end{bmatrix}\in {L}^1(M)^{\otimes 2}:\int_{m\in M}{\sigma_i(m)dm} = 1; \\
    \;&\forall m\in M,\sigma_i(m)\geq 0, \;i\in\{0,1\}\Big\}
    \label{eq:strategy_space_sender_signaling}
    \end{aligned}
\end{equation}

Based on message $m$, the prior belief $p(H_0),p(H_1)$ of the true hypothesis is updated to be a posterior belief $p(H_j|m),\;j\in\{0,1\}$ using Bayes' rule: 
\begin{equation}
    p(H_j|m) = \frac{\sigma_j(m)p(H_j)}{\sigma_1(m)p(H_1) +\sigma_0(m)p(H_0) },
    \label{posterior_belief}
\end{equation}

\paragraph*{Detector's utility function} 
{
Upon receiving the message $m$, a proactive detector aims to maximize the estimated \textit{detection rate} while constraining the \textit{false alarm rate} to $\alpha$, where $\alpha$ is a threshold parameter determined by the detector. To effectively utilize the message $m$, the proactive detector resorts to two conditional probabilities $p(H_1|m),p(H_0|m)$-posterior beliefs presented in \eqref{posterior_belief}-to estimate the detection rate. As shown in the game tree in Figure \ref{fig:htg}, the proactive detector knows that the user is {normal} with probability $p(H_0|m)$ (left branch of the game tree) and  {abnormal} with probability $p(H_1|m)$ (right branch of the game tree). Furthermore, if the detector decides $\delta(m) = 1$, she successfully detects an abnormal user but incurs a false alarm if the user is normal. Conversely, if she chooses $\delta(m)=0$, her decision results in neither detection nor false alarm, regardless of the user's type.  For a given message $m$, the detection rate is $\delta(m)p(H_1|m)$, and the false alarm rate is $\delta(m)p(H_0|m)$. The detector must carefully weigh the risk of incurring a false alarm against the potential gain in detection accuracy when opting for $\delta(m)=1$.} 
Thus, we formulate the detector's optimization problem as follows:
\begin{equation}
\begin{aligned}
    \underset{\delta\in\Gamma}{\sup}&\;\delta(m){p(H_1|m)},\;\;\text{s.t.}&\; \delta(m) p(H_0|m) \leq \alpha.
     \label{opt:proactive_detector}
\end{aligned}
\end{equation} 
In other words, we can designate the detector's utility as the objective function in \eqref{opt:proactive_detector} as follows: 
\begin{equation}
    \bar{U}_R(m,\delta(m)) = \sum_{i\in \{0,1\}}{p(H_i|m)U_R(H_i,m, \delta(m))},
    \label{eq:util_detector_proactive}
\end{equation}
 with $U_R(H_1,m,1)=1,\;U_R(H_1,m,0) = 0,\;U_R(H_0,m,0) = 0,\;U_R(H_0,m,1) = 0$.
and the detector's strategy space is chosen as 
\begin{equation}
    \bar{\Gamma} = \{\delta\in 2^M:\;\delta(m)p(H_0|m)\leq \alpha\}.
    \label{def:detector_signaling_strategy}
\end{equation}

\paragraph*{Attacker's cost function}

Similar to Stackelberg game scenario as introduced in Section \ref{sec:stackelberg_game}, the attacker faces the trade-off between undermining the detector's performance and lowering the costs due to distortion of messages. In our framework, the attacker knows the model including the distributions $f_0,f_1$ as well as the detector's utility function. The assumption of an omnipotent attacker aligns with the Kerckhoff's principle \cite{shannon1949kerckhoff} used for adversarial analysis. We introduce \textit{de facto} detection rate $\bar{P}_D: \bar{\Gamma} \rightarrow [0,1]$ and false alarm rate $\bar{P}_F: \bar{\Gamma} \rightarrow [0,1]$:
\begin{align}
     \bar{P}_D(\delta^*) &=  \int_{{\delta}^*(m)=1}{\sigma_1(m)dm},\;
     \label{detection_rate_signalling}\\ \bar{P}_F(\delta^*) &=  \int_{{\delta}^*(m)=1}{\sigma_0(m)dm}.
  \label{false_alarm_rate_signalling}
\end{align} 
The detector makes decisions based on the attacker's message whose distribution are distorted from $f_0$ to $\sigma_0$ and from $f_1$ to $\sigma_1$. Notice that the attacker does not modify the `label' or the index of the hypothesis (i.e., $i\in\{0,1\}$) of the message. The \textit{de facto} detection rate measures the consequence of the detection rate under distorted messages. From the attacker's viewpoint, the detector knows his existence and does her best to distinguish between $H_0$ and $H_1$ from the received messages (from the man-in-the-middle). The attacker anticipates it and aims to minimize the \textit{de facto} detection rate to reduce the accuracy for the detector to tell $H_1$ or $H_0$ from the distorted messages. Thus the quantity \eqref{detection_rate_signalling} is included in the attacker's cost function. 

Considering the KL divergence as the penalty function,  we formulate the attacker's optimization problems $j\in\{0,1\}$ as follows:
    \begin{equation}
\begin{aligned}
 \underset{\substack{\sigma_j\in \bar{A}_S}}{\min}& \int_{\delta^*(m)=1}{{\sigma_1(m)}dm} + \lambda \int_{m \in M}{{\sigma_j(m)\ln\frac{\sigma_j(m)}{f_j(m)}}dm}.
    \label{eq:signalling_intruder2_1}
\end{aligned}
\end{equation}

    Notice there is no genuine false alarm rate included in attacker's cost function \eqref{eq:signalling_intruder2_1}, implying that the attacker values causing mis-detection over false alarms. We propose framework of proactive detector using a signaling game as in the following definition.

\begin{Def}[The game of a proactive detector]
We model the relationship between the proactive detector and the attacker-in-the-middle, shown in Figure \ref{fig:naive_detector}), as a signaling game $$\mc{G}_2 = \langle \mc{I}, \mc{H}, p, M,\bar{\Gamma},\bar{A}_R, \bar{A}_S, \bar{U}_R , \bar{C}_S\rangle,$$ where $\mc{I} = \{\text{Attacker},\text{Detector}\}$ is the set of players (i.e., attacker is the sender (see the player in red in Figure \ref{fig:naive_detector}), detector is the receiver(see the player in green in Figure \ref{fig:naive_detector})); $\mc{H} = \{H_0,H_1\}$ is the user's type space (see the left branch in Figure \ref{fig:naive_detector}) with $p =[p(H_0)\;p(H_1)]$ as the prior of the attacker's type; $M$ is the message space in lemma \ref{lemma:mu_sigma_equiv};  \;$\bar{\Gamma}$ represents the detector's strategy space in \eqref{def:detector_signaling_strategy} and $\bar{A}_R = \{0,1\}$ refers to the detector's action space; $\bar{A}_S$ refers to the attacker's action space defined in \eqref{eq:strategy_space_sender_signaling}; $\bar{U}_R: M \times \bar{A}_R\rightarrow \mR$ is the utility function of the detector defined in \eqref{eq:util_detector_proactive}; $\bar{C}_S:\mc{H}\times M \times \bar{\Gamma} \rightarrow \mR$ is the cost function of the attacker defined in \eqref{eq:signalling_intruder2_1}.
\label{Def:signalling_game}
\end{Def}

\textbf{Remark:} The proposed game model in definition \ref{Def:signalling_game} differs from previous game-theoretic models for network detection in that we adopt a signaling game, where the detector (receiver) can update her belief and thus update her detection rule based on attacker's message, characterizing the concept of `proactive defense.' Compared to detection models based on machine learning techniques, the proposed framework is not entirely data-driven. It thus does not require so great amount of data to reach a precise performance as the machine-learning-based counterpart. We will add parameters from real datasets, as discussed in later sections.

\subsection{Equilibrium Analysis}
To solve the signaling game $\mc{G}_2$, we adopt the concept of Perfect Bayesian Nash Equilibrium (PBNE)\cite{fudenberg1998game},  {which aims at solving the optimization problem \eqref{eq:receiver_np}\eqref{eq:signalling_intruder2_1} under belief consistency condition \eqref{posterior_belief}. It is clear that only hybrid PBE exist. }

\paragraph{Detector's Optimal Decision Rule}

For a given message $m$, the proactive detector faces the constrained optimization problem \eqref{eq:receiver_np}.  We obtain the detector's optimal strategy explicitly $\delta^*$ by solving \eqref{eq:receiver_np} as follows:  
\begin{equation}
    \delta^*(m) = \begin{cases}
    1 & p(H_0|m)<\alpha, \\
    r\in (0,1) & p(H_0|m)=\alpha, \\
    0 & p(H_0|m)>\alpha.
    \end{cases}
    \label{eq:receiver_np}
\end{equation}
 or equivalently $\delta^*(m)=1$ if $\sigma^*_1(m)>\beta\sigma^*_0(m)$ and $\delta^*(m)=0$ otherwise, where $\beta = \frac{p(H_0)}{p(H_1)}\left(\frac{1}{\alpha}-1\right)$ is the threshold of the likelihood ratio test. 

\paragraph{Attacker's Equilibrium Strategies} 

By sequential rationality and the expression \eqref{eq:signalling_intruder2_1}, the attacker's equilibrium strategies $\sigma^*_1,\sigma^*_0$ are best-responses of the detector's optimal decision rule $\delta^*$ in \eqref{eq:receiver_np}. 
As seen in the Stackelberg game case in Section \ref{sec:stackelberg_game}, with the help of exact game potentials \cite{lloyd_1996potential_game}, we manage to convert the attacker's optimization problems \eqref{eq:signalling_intruder2_1} into one with a single decision-maker:
\begin{equation}
\begin{aligned}
    &\underset{\substack{(\sigma_1,\sigma_0)\in \bar{A}_S}}{\min}\; \int_{\delta^*(m)=1}{\sigma_1(m)dm} \\
    &+ \lambda  \int_{m\in M}{\left({\sigma_1(m)\ln\frac{\sigma_1(m)}{f_1(m)}} + \sigma_0(m)\ln\frac{\sigma_0(m)}{f_0(m)}\right)dm}.
\label{eq:potential}
\end{aligned}
\end{equation}

    To clarify our game-theoretic framework in Definition \ref{Def:signalling_game}, we introduce the assumption on the support of the two underlying distributions $f_0,f_1$, which refers to the largest connected set upon which the value of function is nonzero. 
\begin{assume}[Full-overlapping support]
Let $f_0,f_1\in L^1(M)$ be the distributions of the original messages from the user mentioned in \eqref{def:hypos}. We assume that the two distributions have the same support, i.e., $\supp f_1  = \supp f_0$. 
\label{assume:full_support}
\end{assume}
Assumption \ref{assume:full_support} guarantees that the non-adversarial detection problem (i.e., the classical hypothesis testing problem) is non-trivial. If we drop this assumption, then all messages lying outside their overlapping area will reveal the true hypothesis immediately.   

\begin{assume}[Absolute continuity]
\label{assume:abs_cts}
Let $\Sigma_1,\Sigma_0$ be the measures of distorted messages and $F_0,F_1$ be the measures of original messages in Lemma \ref{lemma:mu_sigma_equiv}. Suppose that $\sigma_1,\sigma_0, f_1,f_0\in L^1(M)$ as in \eqref{eq:potential} are Radon-Nikodym derivatives of $\Sigma_1,\Sigma_0, F_1,F_0$  with respect to the Lebesgue measure on $M$. Then, we assume $\Sigma_1\ll F_1,\; \Sigma_0\ll F_0$ when $\lambda>0$ in the objective function in \eqref{eq:potential}.
\end{assume}
Assumption \ref{assume:abs_cts} guarantees that the KL divergence terms in the attacker's objective functions are well-defined. We arrive at the following lemma.
\begin{lemma}
Let $\sigma_1,\sigma_0, f_1,f_0$ be the distributions mentioned in Assumption \ref{assume:abs_cts}. Then, we have 
\begin{equation}
    \supp \sigma_1 = \supp f_1,\;\;\supp \sigma_0 = \supp f_0,
    \label{eq:abs_cts_same_support}
\end{equation}
and furthermore,
\begin{equation}
    \supp \sigma_1 = \supp\sigma_0.
    \label{eq:sig1_sig0_same_support}
\end{equation}
\label{lemma:full_support}
\end{lemma}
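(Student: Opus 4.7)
The equality \eqref{eq:sig1_sig0_same_support} will follow from \eqref{eq:abs_cts_same_support} together with Assumption~\ref{assume:full_support}, so the task reduces to proving the two support equalities in \eqref{eq:abs_cts_same_support}. I will prove each equality by establishing the two inclusions separately. The forward inclusion is immediate from absolute continuity; the reverse inclusion is the substantive step and requires using optimality of $\sigma_j$ in \eqref{eq:potential}.

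\textbf{Forward inclusion $\supp \sigma_j \subseteq \supp f_j$.} Fix $j\in\{0,1\}$. By Assumption~\ref{assume:abs_cts}, $\Sigma_j \ll F_j$, i.e., every $F_j$-null set is $\Sigma_j$-null. The set $A_j := \{m \in M : f_j(m) = 0\}$ is the complement of $\supp f_j$ (up to a Lebesgue-null set), and $F_j(A_j) = \int_{A_j} f_j\, dm = 0$. Absolute continuity then gives $\Sigma_j(A_j) = 0$, hence $\sigma_j = 0$ Lebesgue-almost everywhere on $A_j$, which shows $\supp \sigma_j \subseteq \supp f_j$.

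\textbf{Reverse inclusion $\supp f_j \subseteq \supp \sigma_j$.} Here I would exploit the fact that $\sigma_j$ is an optimizer of the attacker's potential \eqref{eq:potential}. Attach a Lagrange multiplier $\nu_j$ to the normalization $\int_M \sigma_j\, dm = 1$ and write the stationarity condition of \eqref{eq:potential} pointwise: on $\{m : \sigma_j(m) > 0\}$ one obtains
\begin{equation}
\mathbb{1}_{\delta^*(m)=1} \;+\; \lambda\,\ln\!\frac{\sigma_j(m)}{f_j(m)} \;+\; \lambda \;+\; \nu_j \;=\; 0,
\end{equation}
which solves to $\sigma_j(m) \;\propto\; f_j(m)\,\exp\!\bigl(-\tfrac{1}{\lambda}\mathbb{1}_{\delta^*(m)=1}\bigr)$, a density that is strictly positive exactly on $\supp f_j$. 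To rule out a corner solution $\sigma_j \equiv 0$ on some subset $B \subseteq \supp f_j$ of positive Lebesgue measure, I would use a variational argument: the marginal change in the KL penalty when adding an infinitesimal mass $\epsilon g(m)$ with $\supp g \subseteq B$ is $\lambda \int g(m)\ln(\epsilon g(m)/f_j(m))\,dm \to -\infty$ as $\epsilon \downarrow 0$, while the detection term changes by $O(\epsilon)$. This perturbation strictly decreases the objective in \eqref{eq:potential}, contradicting optimality. Hence $\sigma_j > 0$ a.e.\ on $\supp f_j$, yielding $\supp f_j \subseteq \supp \sigma_j$.

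\textbf{Concluding \eqref{eq:sig1_sig0_same_support}.} Combining both inclusions gives \eqref{eq:abs_cts_same_support}. Then Assumption~\ref{assume:full_support} asserts $\supp f_1 = \supp f_0$, so chaining $\supp \sigma_1 = \supp f_1 = \supp f_0 = \supp \sigma_0$ yields \eqref{eq:sig1_sig0_same_support}.

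\textbf{Main obstacle.} The only delicate step is the reverse inclusion, because pure absolute continuity $\Sigma_j \ll F_j$ does not imply equal support in general. The key is to leverage the $-\infty$ slope of the entropy term $\sigma_j \ln(\sigma_j/f_j)$ as $\sigma_j \downarrow 0$ when $f_j > 0$: infinitesimal mass transported into $\supp f_j \setminus \supp \sigma_j$ produces an unbounded drop in the KL penalty against only a bounded gain in the detection term, forcing the optimizer to spread mass over all of $\supp f_j$ when $\lambda > 0$.
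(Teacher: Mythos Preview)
The paper offers no proof of this lemma; the only related sentence is the remark after the statement that \eqref{eq:sig1_sig0_same_support} follows from ``combining Lemma~\ref{lemma:full_support} and Assumption~\ref{assume:abs_cts}'' (which itself looks like a misprint for Assumption~\ref{assume:full_support}, since it is the full-overlapping-support assumption that gives $\supp f_1=\supp f_0$). Your argument is therefore considerably more complete than anything the paper supplies.

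Your key observation is correct and worth emphasizing: the reverse inclusion $\supp f_j\subseteq\supp\sigma_j$ does \emph{not} follow from $\Sigma_j\ll F_j$ alone, so the lemma as literally stated (with $\sigma_j$ merely satisfying Assumption~\ref{assume:abs_cts}) is false without an extra hypothesis. You close the gap by taking $\sigma_j$ to be a minimizer of the attacker's potential~\eqref{eq:potential} and exploiting the $-\infty$ slope of $\sigma\ln(\sigma/f_j)$ at $\sigma=0$ when $f_j>0$; the variational perturbation argument you sketch is sound once $\delta^*$ is held fixed, since for fixed $\delta^*$ the problem separates and is convex in each $\sigma_j$. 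This is the right repair in context: the paper invokes the lemma only for equilibrium strategies, and the explicit equilibrium formulas \eqref{eq:signal_equil_0}--\eqref{eq:signal_equil_1} indeed have full support on $\supp f_j$. Just note that you are proving a slightly stronger (and more defensible) statement than the one the paper writes down, namely the version with ``$\sigma_j$ optimal in~\eqref{eq:potential}'' added to the hypotheses.
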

 The conclusion \eqref{eq:sig1_sig0_same_support} is the result of combining Lemma \ref{lemma:full_support} and Assumption \ref{assume:abs_cts}. We categorize the solutions to \eqref{eq:potential} in terms of three regimes: $\lambda = 0,\;\lambda=\infty$,  and $\lambda\in (0,\infty)$. 

\subsubsection{Detection Rate Only: When $\lambda = 0$}
When $\lambda = 0$, the attacker does not suffer a penalty from distorting the messages, and the signaling game $\mc{G}_2$ can be considered as a `cheap-talk' game \cite{crawford1982cheap_talk}.  We arrive at the following conclusions on the attacker's equilibrium strategies. 
\begin{prop}
\label{prop:sol_signaling_lam_0}
Let $\mc{G}_2$ be the signaling game defined in Definition \ref{Def:signalling_game}. Let \eqref{eq:potential} be the optimization problem for the attacker with $\lambda = 0$. 
Then, the attacker's strategies at equilibrium are all such that the following constraints are met.
\begin{enumerate}
    \item  When $0<\beta<1$, the attacker's equilibrium strategies $\sigma^*_1,\sigma^*_0$ satisfy the following constraints:
\begin{align}
    \sigma^*_1(m)&= \begin{cases}
    \beta \sigma^*_0(m) &  m\in \supp \sigma^*_0, \\
    (1-\beta)\varphi(m) & \;m \notin \supp \sigma^*_0,\label{eq:sig_s_h1}
    \end{cases} \\
     \text{with}&\;\int_{m \notin \supp \sigma^*_0}{\varphi(m)dm} = 1.
\end{align}     
The regions of rejection and of acceptance are:
\begin{equation}
    M_0 = \supp \sigma^*_0,\;\; M_1 = \supp \varphi;
    \label{eq:sig_s_h0}
\end{equation}

\item When $\beta \geq 1$,  distributions that satisfy the constraint
\begin{equation}
   \forall m\in M,\;\sigma^*_1(m) \leq \beta \sigma^*_0(m),\;\;
\end{equation}
 are equilibrium strategies for the attacker. 
The corresponding regions of rejection and of acceptance are: $M_1 = \varnothing,\;M_0 = M$.
\end{enumerate}
\end{prop}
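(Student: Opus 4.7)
The plan is to reduce the analysis to a purely measure-theoretic minimization problem, exploiting that $\lambda=0$ strips the KL penalty and leaves a linear objective, and then characterize the minimizers in each of the two cases separately.

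First, I would substitute $\lambda=0$ into the attacker's combined potential \eqref{eq:potential}, leaving only
$\min_{(\sigma_1,\sigma_0)\in\bar{A}_S}\int_{\delta^*(m)=1}\sigma_1(m)\,dm$.
Plugging in the detector's best response from \eqref{eq:receiver_np}, the set $\{\delta^*(m)=1\}$ coincides (up to the $\sigma_1=\beta\sigma_0$ boundary) with $A:=\{m:\sigma_1(m)>\beta\sigma_0(m)\}$. So the attacker's task is exactly to minimize the $\sigma_1$-mass of $A$. Because $\sigma_1(m)\le\beta\sigma_0(m)$ on $A^c$ and $\int_M\sigma_0=\int_M\sigma_1=1$, I obtain the lower bound
\begin{equation}
\int_A\sigma_1\,dm \;=\; 1 - \int_{A^c}\sigma_1\,dm \;\ge\; 1 - \beta\int_{A^c}\sigma_0\,dm \;\ge\; 1 - \beta.
\end{equation}
This single chain of inequalities is the workhorse of both cases.

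For the case $0<\beta<1$, the lower bound $1-\beta$ is positive, and I would argue that equality forces both inequalities to be tight simultaneously. Tightness of the second requires $\int_{A^c}\sigma_0=1$, i.e.\ $\supp\sigma_0^*\subseteq A^c$; tightness of the first then requires $\sigma_1^*=\beta\sigma_0^*$ almost everywhere on $A^c$. On $A^c\setminus\supp\sigma_0^*$ this identity forces $\sigma_1^*=0$, so in fact $A^c=\supp\sigma_0^*$ (up to null sets), which gives $M_0=\supp\sigma_0^*$ as claimed in \eqref{eq:sig_s_h0}. On $A=(\supp\sigma_0^*)^c$, the remaining $\sigma_1^*$-mass equals $1-\beta$, so writing $\sigma_1^*=(1-\beta)\varphi$ with $\varphi$ any probability density on $A$ yields the full description \eqref{eq:sig_s_h1}. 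Conversely, any pair satisfying these constraints achieves the bound, so it indeed constitutes the set of equilibrium strategies. For the case $\beta\ge1$, the bound $1-\beta$ is non-positive and the true infimum is $0$, attained precisely when $A$ is empty, i.e.\ $\sigma_1(m)\le\beta\sigma_0(m)$ for all $m$; any such pair gives zero detection and hence $M_1=\varnothing$, $M_0=M$.

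The main obstacle I anticipate is the boundary set $\{\sigma_1=\beta\sigma_0\}$, where the detector is genuinely indifferent and the rule \eqref{eq:receiver_np} permits any $r\in[0,1]$. To make the characterization a \emph{PBNE} rather than just a best response to a specific tie-breaking rule, I would have to verify that selecting $\delta^*(m)=0$ on this boundary is consistent with the detector's sequential rationality in \eqref{sol:detector_sequential}, and that the belief update \eqref{eq:belief_consistency} on $(\supp\sigma_0^*)^c$ (where $\sigma_0^*=0$) can be specified so that $p(H_0\mid m)<\alpha$ there, forcing $\delta^*(m)=1$ as required. Both checks reduce to showing that the indifference set is handled by a measurable selection consistent with the claimed $M_0,M_1$ partition, which is a standard tie-breaking argument but needs to be spelled out carefully to guarantee that every minimizer of the linear objective actually appears as the sender's strategy in some PBNE.
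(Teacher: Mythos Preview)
Your argument is correct and in fact cleaner than the paper's. For $\beta\geq 1$ the two proofs coincide: both simply note that $\sigma_1\leq\beta\sigma_0$ everywhere makes the rejection region empty and hence drives the objective to its trivial lower bound $0$. For $0<\beta<1$, however, the paper proceeds by writing down explicit best-response maps $\text{BR}_0,\text{BR}_1:L^1(M)\to L^1(M)$ (one of them involving an auxiliary threshold $\tau$ determined by $\int_{\sigma_1>\tau}\sigma_1\,dm=\beta$) and then verifying that the pair \eqref{eq:sig_s_h1}--\eqref{eq:sig_s_h0} is a mutual fixed point $\sigma_1^*=\text{BR}_1[\sigma_0^*]$, $\sigma_0^*=\text{BR}_0[\sigma_1^*]$. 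Your approach instead extracts the sharp lower bound $\int_A\sigma_1\geq 1-\beta$ directly from the normalization and the defining inequality on $A^c$, and then reads off the full structure of minimizers from the two equality conditions. This is more elementary, gives the optimal value $1-\beta$ explicitly (which the paper's fixed-point verification does not), and simultaneously certifies that \emph{every} minimizer has the stated form rather than merely exhibiting one equilibrium. Your closing discussion of tie-breaking on $\{\sigma_1=\beta\sigma_0\}$ and belief assignment on $(\supp\sigma_0^*)^c$ is also a genuine addition; the paper does not address these PBNE-consistency points at all.
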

For the proof of the proposition, we refer to Section \ref{subsec:lam_0_sender} in the appendix. 

\subsubsection{The KL Divergence Only:  $\lambda\rightarrow\infty$}
When $\lambda\rightarrow\infty$ in \eqref{eq:potential}, the attacker's cost function is completely measured by the KL divergence of the measures of distorted messages to the ones of original messages. We show in the following that the detector's best response toward $\delta^*$ is to adopt a `disclosure strategy': i.e., the attacker delivers messages to the detector without distortion. 

\begin{prop}
\label{prop:sol_signaling_lam_infty}
Let $\mc{G}_2$ be the signaling game defined in Definition \ref{Def:signalling_game}, and let \eqref{eq:potential} be the optimization problem for the attacker with $\lambda\rightarrow\infty$. Then, the attacker's strategies are as follows:
\begin{equation}
   \forall m\in M,\;\; \sigma^*_1(m) = f_1(m),\;\; \sigma^*_0(m) = f_0(m),
    \label{eq: kl_divgence_min}
\end{equation}
and the regions of rejection and of acceptance depend on $f_0(m),f_1(m)$: $M_1 = \left\{m: f_1(m)>{\beta}f_0(m)\right\},\;M_*  = \left\{m: f_1(m)={\beta}f_0(m)\right\}, \;M_0 = \left\{m: f_1(m)<{\beta}f_0(m)\right\}.$
\end{prop}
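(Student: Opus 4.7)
The plan is to treat the regime $\lambda\to\infty$ as one where the KL penalty in \eqref{eq:potential} dominates every other term, forcing the attacker's equilibrium measures to coincide with $f_1$ and $f_0$, and then to read off the three regions directly from the detector's optimal rule \eqref{eq:receiver_np}. Concretely, I would first exhibit the candidate equilibrium $\sigma^*_1=f_1,\ \sigma^*_0=f_0$ together with the posterior it induces through belief consistency \eqref{eq:belief_consistency}, and then verify each of the three PBNE conditions of Definition \ref{def:pbne} in turn.

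For attacker's sequential rationality, I would use Gibbs' inequality: for any admissible $\sigma_j\in\bar{A}_S$, the KL divergence $\int \sigma_j(m)\ln\tfrac{\sigma_j(m)}{f_j(m)}dm$ is non-negative and vanishes if and only if $\sigma_j=f_j$ almost everywhere (Assumption \ref{assume:abs_cts} ensures the divergence is well-defined). Meanwhile the first term $\int_{\delta^*(m)=1}\sigma_1(m)dm$ is uniformly bounded in $[0,1]$. Thus, as $\lambda\to\infty$, any deviation of $\sigma_j$ away from $f_j$ on a set of positive measure incurs a cost of order $\lambda$, which dwarfs the at most unit-sized saving obtainable in the detection-rate term. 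Formally, for any $\sigma_j\neq f_j$, the objective in \eqref{eq:potential} diverges to $+\infty$, whereas plugging in $\sigma_j=f_j$ gives a finite value bounded by $1$. Hence $\sigma^*_j=f_j$ is the unique minimizer in this regime.

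Given $\sigma^*_j=f_j$, belief consistency \eqref{eq:belief_consistency} produces the posterior $p(H_j\mid m)=\frac{p(H_j)f_j(m)}{p(H_0)f_0(m)+p(H_1)f_1(m)}$, which is well-defined on the common support guaranteed by Assumption \ref{assume:full_support} and Lemma \ref{lemma:full_support}. Substituting into the detector's optimal rule \eqref{eq:receiver_np}, the condition $p(H_0\mid m)<\alpha$ is algebraically equivalent, via the definition $\beta=\tfrac{p(H_0)}{p(H_1)}\bigl(\tfrac{1}{\alpha}-1\bigr)$, to $f_1(m)>\beta f_0(m)$; the equality and reverse inequality cases deliver $M_*$ and $M_0$ respectively. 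This verifies the detector's sequential rationality with respect to $\sigma^*_1,\sigma^*_0$ and yields the three regions stated in the proposition.

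I expect the main subtlety to lie in justifying the limit $\lambda\to\infty$ rigorously rather than heuristically. Rather than taking a limit of equilibria along a sequence $\lambda_n\to\infty$ (which would require upper-semicontinuity of the best-response correspondence in an infinite-dimensional strategy space), I would phrase the argument as: for every $\lambda$ sufficiently large, any $\sigma_j\neq f_j$ is strictly dominated by $f_j$ in the attacker's objective, so $\sigma^*_j=f_j$ is the unique best response to $\delta^*$, and \eqref{eq: kl_divgence_min} holds for all such $\lambda$. The $\lambda\to\infty$ statement then records the stable equilibrium in this regime, with the interpretation that the infinite lying-cost forces truthful reporting and collapses $\mc{G}_2$ to classical Neyman-Pearson testing, matching the intuition noted after Theorem \ref{theo:stackelberg_attacker1}.
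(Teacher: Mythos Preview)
Your proposal is correct and follows essentially the same approach as the paper: both rely on Gibbs' inequality (non-negativity of KL divergence, with equality iff $\sigma_j=f_j$) together with the boundedness of the detection-rate term to conclude that the KL penalty forces $\sigma^*_j=f_j$ as $\lambda\to\infty$. Your treatment is in fact more thorough than the paper's, which states the KL dominance argument in three sentences and does not explicitly derive the regions $M_0,M_*,M_1$ from \eqref{eq:receiver_np} or check the remaining PBNE conditions.
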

For the proof of the proposition, we refer to Section \ref{subsec:lam_infty_sender} in the appendix. 

\subsubsection{General Case: $0<\lambda <\infty$}
In the general case $\lambda\in (0,\infty)$, the attacker faces the trade-off between undermining the detector's performance and minimizing the penalty resulting from message distortion. We categorize PBNE into three classes: separating, pooling, and hybrid equilibrium. It is not difficult to find out there do not exist equilibria of the first two kinds. Thus we focus on deriving the hybrid equilibrium.

To obtain the attacker's best-response strategies $\sigma^*_1,\sigma^*_0$ under the detector's optimal decision rule $\delta^*$ as in \eqref{eq:receiver_np}, we solve the generalized KKT conditions \cite{luenberger1997optimization} of the optimization problem \eqref{eq:potential} and arrive at the following sufficient conditions for finding a pair of equilibrium strategies for the attacker:

\begin{assume}
\label{para}
Let \eqref{eq:potential} be the optimization problem for the attacker in the signaling game $\mc{G}_2$ with $0<\lambda<\infty$. We assume that there exists $\zeta\in \left[\underset{m\in M}{\sup} \frac{f_0(m)}{f_1(m)},\;e^{\frac{1}{\lambda}}\underset{m\in M}{\sup} \frac{f_0(m)}{f_1(m)}\right]$ such that 
     \begin{equation}
          J(\zeta) =  \int_{\frac{f_1(m)}{f_0(m)}<\frac{1}{\zeta}}{f_0(m)dm} + \int_{\frac{f_1(m)}{f_0(m)}>\frac{1}{\zeta}e^{\frac{1}{\lambda}}}{f_0(m)dm} ,
          \label{zeta_beta_1}
     \end{equation}
     where $J: \mR_+ \rightarrow \mR$ is a function defined as
    \begin{equation}
\begin{aligned}
    &J(\zeta)  =  \beta\zeta \left(\int_{\frac{f_1(m)}{f_0(m)}<\frac{1}{\zeta}}{f_1(m)dm} + \int_{\frac{f_1(m)}{f_0(m)}>\frac{1}{\zeta}e^{\frac{1}{\lambda}}}{f_1(m)e^{-\frac{1}{\lambda}}dm}\right)\\
    &+(\beta-1)\zeta^{\frac{\beta}{1+\beta}}\left(\int_{\frac{1}{\zeta}\leq \frac{f_1(m)}{f_0(m)}\leq\frac{1}{\zeta}e^{\frac{1}{\lambda}}}{f_1(m)\left(\frac{f_1(m)}{f_0(m)}\right)^{-\frac{1}{1+\beta}}dm}\right).
    \label{zeta_beta_2}
\end{aligned}
\end{equation}
\label{lemma:function_J_zeta}
\end{assume}
We discuss the proof of the lemma in Section \ref{sec:sender_optimal_strategy_finite_lambda} in the appendix. Now we argue through the following proposition that both the attacker's strategies at equilibrium and the detector's region of rejection can be parameterized by $\zeta$.
\begin{prop}
Let $\mc{G}_2$ be the signaling game defined in Definition \ref{Def:signalling_game}. The attacker's strategies at equilibrium $\sigma^*_1,\sigma^*_0$ can be categorized in terms of $\lambda$:
 When $0<\lambda<\infty$, the attacker admits the following strategies at equilibrium:
\begin{align}
    \sigma^*_0(m) &=\begin{cases}
    c_0f_0(m)  & m\in M_0, \\
    c_0 f_1^{\frac{\beta}{1+\beta}}(m)f_0(m)^{\frac{1}{1+\beta}}(m)\zeta^{\frac{\beta}{1+\beta}}  & m\in M_*,\\
    c_0f_0(m) & m\in M_1;
    \end{cases} \label{eq:signal_equil_0} \\
    \sigma^*_1(m) & = \begin{cases}
    c_1f_1(m) & m\in M_0,\\
    c_1f_1^{\frac{\beta}{1+\beta}}(m)f_0(m)^{\frac{1}{1+\beta}}(m)\zeta^{-\frac{1}{1+\beta}} & m\in M_*, \\
    c_1f_1(m) e^{-\frac{1}{\lambda}} & m\in M_1;
    \end{cases}
\label{eq:signal_equil_1}
\end{align}
with the normalization factors $c_0,c_1$ as
\begin{align}
    {1}/{c_0} & = \int_{M_0}{f_0(m)dm}+\int _{M_*}{f_0(m)\left(\frac{f_1(m)}{f_0(m)}\right)^{\frac{\beta}{1+\beta}}\zeta^{\frac{\beta}{1+\beta}}dm} \\
    &+\int_{M_1}{f_0(m)dm},
    \label{c0_relabel}\\
    {1}/{c_1}&= \int_{M_0}{f_1(m)dm}+\int_{M_*}{f_1(m)\left(\frac{f_1(m)}{f_0(m)}\right)^{-\frac{1}{1+\beta}}\zeta^{-\frac{1}{1+\beta}}dm} \\
    &+\int_{M_1}{f_1(m)e^{-\frac{1}{\lambda}}dm}
      \label{c1_relabel}.
\end{align} 
The exact regions $M_0,M_*,M_1$ (strict acceptance region, uncertainty region, strict rejection region) can be characterized in terms of $\zeta$:
\begin{align}
    M_0 &= \left\{m: \frac{f_1(m)}{f_0(m)}<\frac{1}{\zeta}\right\},\;\;\\
    M_* &= \left\{m:\frac{1}{\zeta}\leq \frac{f_1(m)}{f_0(m)}\leq e^{\frac{1}{\lambda}}\frac{1}{\zeta}\right\}, \\
    M_1 &= \left\{m: \frac{f_1(m)}{f_0(m)}>e^{\frac{1}{\lambda}}\frac{1}{\zeta}\right\}.
\end{align}
\label{prop:finite_lam_signaling}
\end{prop}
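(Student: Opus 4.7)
The plan is to combine the generalized KKT conditions for the attacker's potential \eqref{eq:potential} with a self-consistency requirement on the partition induced by the detector's equilibrium decision rule \eqref{eq:receiver_np}. Since $\{m:\delta^*(m)=1\}$ depends on the attacker's densities only through the ratio test $\sigma_1^*\gtrless\beta\sigma_0^*$, I first fix a hypothetical partition $(M_0,M_*,M_1)$ with $M_0=\{\sigma_1<\beta\sigma_0\}$, $M_1=\{\sigma_1>\beta\sigma_0\}$, and $M_*=\{\sigma_1=\beta\sigma_0\}$, treat the boundary identity on $M_*$ as a pointwise equality constraint, and only at the end impose that the partition be consistent with the resulting optimizers.

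With this setup, the Lagrangian for \eqref{eq:potential} carries multipliers $\nu_0,\nu_1$ for the two normalizations $\int\sigma_i=1$ together with a pointwise multiplier $\kappa(m)$ on $M_*$ enforcing $\sigma_1=\beta\sigma_0$; non-negativity is automatic from the KL terms. Pointwise differentiation yields three regimes. On $M_0$ the detection term contributes nothing, so the FOCs give $\sigma_0^*(m)=c_0 f_0(m)$ and $\sigma_1^*(m)=c_1 f_1(m)$ with $c_i=\exp((\nu_i-\lambda)/\lambda)$. On $M_1$ the detection term adds $+1$ to the $\sigma_1$-FOC, producing $\sigma_1^*(m)=c_1 f_1(m)e^{-1/\lambda}$ while $\sigma_0^*$ is unchanged. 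On $M_*$, taking a weighted combination of the two FOCs with weights $\beta$ and $1$ to eliminate $\kappa(m)$ and then substituting $\sigma_1=\beta\sigma_0$ yields $\sigma_0^*(m)\propto f_1(m)^{\beta/(1+\beta)}f_0(m)^{1/(1+\beta)}$, from which $\sigma_1^*$ follows by the boundary identity.

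Next I introduce $\zeta:=c_1/(\beta c_0)$. The boundary identity on $M_*$ forces precisely the $\zeta$-factors appearing in \eqref{eq:signal_equil_0}--\eqref{eq:signal_equil_1}, and translating the strict inequalities $\sigma_1^*\lessgtr\beta\sigma_0^*$ on $M_0,M_1$ into conditions on the likelihood ratio produces $M_0=\{f_1/f_0<1/\zeta\}$, $M_1=\{f_1/f_0>e^{1/\lambda}/\zeta\}$, with $M_*$ as the intermediate band $\{1/\zeta\leq f_1/f_0\leq e^{1/\lambda}/\zeta\}$. This closes the self-consistency loop: the partition used to write down the Lagrangian coincides with the one induced by \eqref{eq:receiver_np} applied to the optimizers.

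To pin down $\zeta$, I would write both normalizations \eqref{c0_relabel} and \eqref{c1_relabel}, divide by $c_0$, and substitute $c_1=\beta\zeta c_0$; after rewriting $f_1^{\beta/(1+\beta)}f_0^{1/(1+\beta)}=f_1(f_1/f_0)^{-1/(1+\beta)}$, the two normalizations are simultaneously satisfied if and only if $\zeta$ solves $J(\zeta)=\int_{M_0\cup M_1}f_0(m)\,dm$ with $J$ as in \eqref{zeta_beta_2}, which is precisely \eqref{zeta_beta_1}; Assumption \ref{para} supplies the required root in the stipulated interval. The main obstacle I anticipate is the fixed-point nature of the problem: the functional $\int_{\delta^*=1}\sigma_1$ is non-smooth because the indicator $\mathbf{1}\{\sigma_1>\beta\sigma_0\}$ jumps at the boundary, so one must justify the partition-and-solve reduction and handle the nontrivial pointwise multiplier $\kappa(m)$ on $M_*$ when that set has positive Lebesgue measure. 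Strict convexity of the KL penalty conditional on the partition makes the KKT conditions both necessary and sufficient, so once a compatible $\zeta$ is found, the stated formulas genuinely realize a hybrid PBNE.
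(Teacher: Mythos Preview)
Your proposal is correct and follows essentially the same route as the paper: both arguments linearize the non-smooth objective by fixing the partition $\{\sigma_1\lessgtr\beta\sigma_0\}$, apply pointwise KKT conditions to the resulting convex problem, and close the loop by a self-consistency check that yields the likelihood-ratio characterization of $M_0,M_*,M_1$ and the scalar equation for $\zeta$. The only cosmetic difference is that the paper carries the inequality $\sigma_1\le\beta\sigma_0$ on $\bar M_0$ with a nonnegative multiplier $\rho_0(m)$ and invokes complementary slackness to split off $M_*=\{\rho_0>0\}$, whereas you impose the equality on $M_*$ directly via $\kappa(m)$ and verify consistency afterward; the relation $c_1=\beta\zeta c_0$ and the resulting formulas coincide.
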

For brief derivations of the proposition, we refer to Section \ref{sec:sender_optimal_strategy_finite_lambda} in the appendix. 
    The complete proof is provided in the supplementary document associated with this paper. In Figure \ref{fig:region of rejection}, we illustrate the three regions to provide a visual understanding. We also include an example where $f_1,f_0$ are both binomial distributions, illustrating how the attacker's manipulation upon the vanilla distributions affects the region of rejection against a proactive detector.   
  \begin{figure}     
    \centering
      \includegraphics[width=0.5\textwidth]{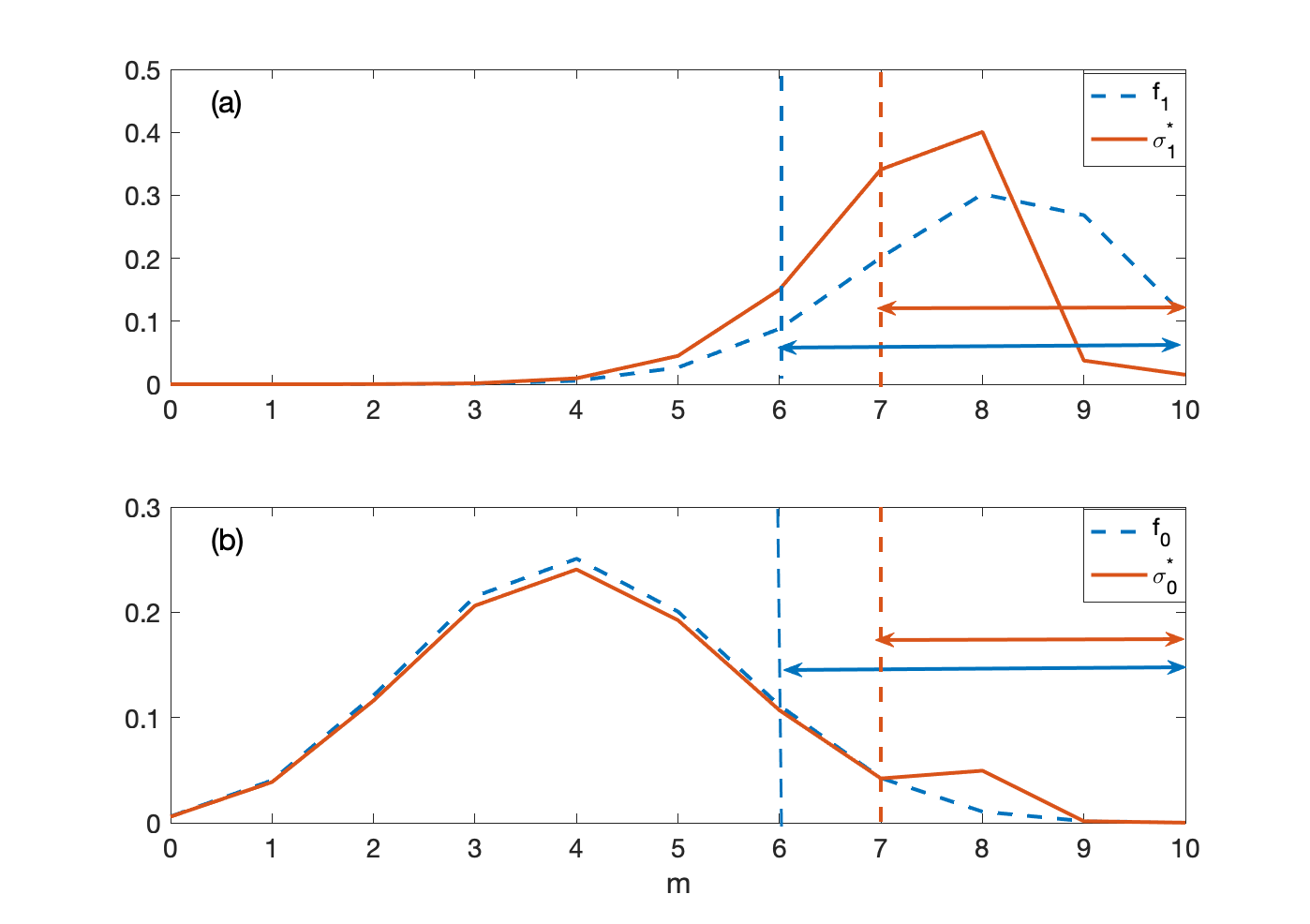} 
        \caption{{An illustration of attacker's optimal strategies $\sigma^*_1,\sigma^*_0$ (red, dashed lines) in comparison with the original distributions $f_1,f_0$ (blue, solid lines). The vanilla distributions $f_1,f_0$ are binom$(10,0.8)$ and binom$(10,0.4)$. We choose $\lambda=0.4$. The red arrows indicate the region of rejection under the attacker's manipulation $M_1$, while the blue arrows indicate the uncorrupted region of rejection $\hat{M}_1$. }}
        \label{fig:region of rejection}
    \end{figure}
\paragraph*{Remark}
It is also possible to use the \textit{counterfactual} detection rate $P^*_D: \Gamma^*:\rightarrow [0,1]$
\begin{equation}
    P^*_D(\delta^*) = \int_{\delta^*(m)=1}{f_1(m)dm},
    \label{pd_counterfactual}
\end{equation}
as the first term in the cost function of the attacker. The quantity in \eqref{pd_counterfactual} measures the detection performance when detector uses the optimal detection rule yet under messages directly from the nature. The attacker’s goal is to directly minimize the counterfactual detection rate. By observing \eqref{pd_counterfactual}, the attacker can be viewed as someone who minimizes the weighted size of the detection region. 
Using \eqref{pd_counterfactual} in the attacker’s utility captures a non-myopic attacker who cares not only his direct interaction with the detection but the holistic sequential interaction (from Nature to Attacker to Detector). However, one flaw is that detection rate \eqref{pd_counterfactual} is an imaginary, conjectural, or counterfactual quantity that is not directly measurable (or observable) by the detector himself. The following proposition characterizes the relationship between counterfactual and \textit{de facto} detection rates.
\begin{prop}
Let $f_0,f_1,\sigma_1,\sigma_0\in L^1(M)$ be as  in assumption \ref{assume:abs_cts}. Let $P_D,\bar{P}_D:\bar{\Gamma}\rightarrow [0,1]$ be two detection rates given in \eqref{detection_rate_signalling} and \eqref{pd_counterfactual}.
Then $P_D(\delta^*)\geq {P}^*_D(\delta^*)$ when the attacker and the detector choose equilibrium strategies. 
\label{prop:defacto_counterfacto}
\end{prop}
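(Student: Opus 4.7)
The plan is to exploit the explicit equilibrium characterization from Proposition~\ref{prop:finite_lam_signaling}. On the strict rejection region $M_1 = \{m : f_1(m)/f_0(m) > e^{1/\lambda}/\zeta\}$, the attacker's equilibrium strategy is $\sigma^*_1(m) = c_1 e^{-1/\lambda} f_1(m)$, so substituting into \eqref{detection_rate_signalling} and \eqref{pd_counterfactual} gives
\begin{equation}
P_D(\delta^*) = c_1 e^{-1/\lambda} \int_{M_1} f_1(m)\,dm = c_1 e^{-1/\lambda}\, P^*_D(\delta^*).
\end{equation}
Thus the proposition collapses to the single scalar claim $c_1 e^{-1/\lambda} \geq 1$, i.e.\ $1/c_1 \leq e^{-1/\lambda}$.

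To bound $1/c_1$, I would invoke the normalization $\int_M \sigma^*_1\,dm = 1$ together with \eqref{c1_relabel}, which expresses $1/c_1$ as a sum of integrals of $f_1$ over $M_0$, $M_*$, and $M_1$ with weights $1$, $[(f_1/f_0)\zeta]^{-1/(1+\beta)}$, and $e^{-1/\lambda}$ respectively. On $M_*$ we have $(f_1/f_0)\zeta \in [1, e^{1/\lambda}]$, so the middle weight lies in $[e^{-1/(\lambda(1+\beta))}, 1]$; on $M_0$ we have $f_1 < f_0/\zeta$, so that integrand can be dominated by $f_0/\zeta$. These pointwise bounds would then be combined with the defining identity \eqref{zeta_beta_1} of Assumption~\ref{para}, which couples the $f_0$- and $f_1$-integrals across the three regions through $J(\zeta)$, with the goal of collapsing the three weighted pieces into something at most $e^{-1/\lambda}$.

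The main obstacle is that the most natural first attempt fails in the wrong direction. Comparing $\sigma^*_1$ against the feasible choice $\sigma_1 = f_1$ in the attacker's potential \eqref{eq:potential} yields $P_D(\delta^*) + \lambda D(\sigma^*_1 \| f_1) \leq P^*_D(\delta^*)$, hence $P_D \leq P^*_D$, which contradicts the claim. So a convexity/variational argument alone cannot succeed; the proof must exploit the fact that the rejection region $M_1$ is itself reshaped by the attacker's distortion, and that this reshaping concentrates $\sigma^*_1$ relative to $f_1$ on $M_1$ enough to beat the KL penalty. Verifying $1/c_1 \leq e^{-1/\lambda}$ via \eqref{zeta_beta_1} is the delicate step, since the admissible range $\zeta \in [\sup_m f_0(m)/f_1(m),\; e^{1/\lambda}\sup_m f_0(m)/f_1(m)]$ is calibrated by exactly the exponent $e^{1/\lambda}$ that appears in the target bound, and the proof will hinge on whether this calibration is tight enough in every regime of $\beta$.
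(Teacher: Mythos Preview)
Your reduction is exactly the paper's: on the strict rejection region $M_1$ one has $\sigma^*_1(m)=c_1e^{-1/\lambda}f_1(m)$, so the two detection rates differ by the scalar $c_1e^{-1/\lambda}$, and the whole proposition is a bound on that constant. That part is right and matches the paper line for line.

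The gap is the direction. The paper's proof does \emph{not} show $c_1e^{-1/\lambda}\geq 1$; it shows the opposite. From the normalization \eqref{c1_relabel} the three weights multiplying $f_1$ on $M_0,M_*,M_1$ are $1$, $\bigl((f_1/f_0)\zeta\bigr)^{-1/(1+\beta)}\in[e^{-1/(\lambda(1+\beta))},1]$, and $e^{-1/\lambda}$, each lying in $[e^{-1/\lambda},1]$. Hence $e^{-1/\lambda}\leq 1/c_1\leq 1$, i.e.\ $1\leq c_1\leq e^{1/\lambda}$, and therefore $c_1e^{-1/\lambda}\leq 1$. (The paper writes the bound on $c_1$ with what looks like a further typo, but uses it only to conclude $\sigma^*_1\leq f_1$ on $M_1$, which is exactly $c_1e^{-1/\lambda}\leq 1$.) That gives $\bar P_D(\delta^*)\leq P^*_D(\delta^*)$: the \textit{de facto} rate is dominated by the counterfactual one, the reverse of the inequality as literally typeset in the statement.

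Your ``obstacle'' is therefore not an obstacle at all---your variational comparison of $\sigma^*_1$ with $f_1$ in \eqref{eq:potential} at the fixed $\delta^*$, yielding $\bar P_D+\lambda D(\sigma^*_1\Vert f_1)\leq P^*_D$, is a perfectly valid proof of the inequality the paper actually establishes. The attempt to salvage $c_1e^{-1/\lambda}\geq 1$ via Assumption~\ref{para} cannot succeed because that inequality is false. Read the proposition's notation as a typo (the labels $P_D,\bar P_D,P^*_D$ are tangled in the statement) and prove the direction your own argument already gives you; that is what the paper does.
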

\begin{proof}
It follows from \eqref{c1_relabel} that $e^{-\frac{1}{\lambda}}\leq c_1\leq 1$ so the attacker's equilibrium strategy $\sigma^*_1 = c_1f_1(m)e^{-\frac{1}{\lambda}}\leq f_1(m)$ for all $m\in M_1$, which proves the proposition. 
\end{proof}

    \textbf{Remark:} The previous discussions on equilibrium assume that $ p(H_0)\sigma^*_0(m) + p(H_1)\sigma^*_1(m) > 0$, suggesting the detector could only receive messages lying in the support of the attacker's equilibrium strategies. It is possible that this assumption is not met, but we can show that such an `off-equilibrium' path does not constitute a PBNE for the signaling game.

\section{Neyman-Pearson Detectors under repeated observations}
\label{sec:repeat_obs}
In this section, we discuss the performance of proactive detectors under repeated observations. 
We assume that the user's type stays the same throughout the game, i.e., we adopt an \textit{open-loop} information structure \cite{bacsar1998dynamic_game}. 

    A prior belief $p(H_0),p(H_1)$ is assigned to the attacker and the detector as common knowledge, respectively. At the beginning of each stage, the user sends identically distributed messages following the distributions $f_0$ or $f_1$ depending on their true type, as described in \eqref{def:hypos}. The attacker manipulates the message based on his observation of the user's type and his mixed strategies from previous stages. After the detector receives the distorted messages, she infers the user's true type at the end of the stage. 

\begin{Def}[History of action profiles]
We define the history of action profiles up to stage $N_1$, denoted as $h^{(j)}\in M^{\otimes j}\times \{0,1\}^{\otimes j},\;j\in[N_1]$, as follows:
\begin{equation}
    h^{(j)} = (m^{(j)}, a^{(j)}),
\end{equation}
where $m^{(j)} = (m_1,\dots,m_{j})\in M^{\otimes j}$ as a generic history of messages up to stage $j$ and  $a^{(j)}\in \{0,1\}^{\otimes j} = A^{\otimes j}_R$ refers to history of the detector's actions up to stage $j$. 
\end{Def}
 In general, at the beginning of every stage $j$, the attacker's mixed strategy and the detector's optimal decision rule should depend on the history $h^{(j-1)}$.
 The set of mixed strategies available to the attacker, denoted as $\bar{A}^{(j)}_S$ for $j\in [N_1]$, is expressed as follows
\begin{equation}
\begin{aligned}
  \bar{A}^{(j)}_S &= \Big\{\begin{bmatrix}
    \sigma^{(j)}_0 \\ \sigma^{(j)}_1
    \end{bmatrix}\in ({L}^1(M))^{\otimes 2}: \\
    &\int_{m'\in M}{\sigma^{(j)}_i(m')dm'} = 1;\;\forall m'\in M,\sigma^{(j)}_i(m)\geq 0, \;i\in\{0,1\}\Big\},
\end{aligned}
    \label{eq:strategy_space_sender_bar}
\end{equation}
where the proactive detector obtains a posterior belief $p(H_1|m^{(j)})$ based on history of distorted messages $m_1,\dots m_{j-1}$ and the current message $j$ as expressed in \eqref{belief_system}. Applying the equation \eqref{belief_system} recursively for $j=1,2,\dots N_1$ we get the following:
\begin{equation}
    p(H_1|m^{(j)}) = \frac{\prod_{k=1}^{j}{\sigma^{(k)*}_1(m_k)}p(H_1)}{\prod_{k=1}^{j}{\sigma^{(k)*}_1(m_k)}p(H_1) + \prod_{k=1}^{j}{\sigma^{(k)*}_0(m_k)}p(H_0)}
    \label{eq:belief_system_equil_j}
\end{equation}
   In our multi-stage game, both the attacker and the detector know their actions in previous stages, or have perfect recalls, so by Kuhn's theorem \cite{aumann1961kuhn_theorem} it suffices to only consider the mixed strategies of attackers and behavior strategies of detectors (or detection rules) throughout the game. 

In general, a dynamic Bayesian game could yield many equilibrium concepts. We state in the following proposition on the most natural one extended from 2-shot signaling game $\mc{G}_1$.
\begin{prop}
\label{prop:s_pbne}
    Let the multi-stage game be the one extended from $\mc{G}_1$. Then under mild technical assumptions, we can write the optimal decision rule explicitly as follows:
\begin{equation}
     \bar{\delta}^{*(j)}(m_{j}|m^{(j-1)}) = \begin{cases}
    1 & {\frac{\sigma^{(j)*}_1(m_j)}{\sigma^{(j)*}_0(m_j)}} > \beta_{j}, \\
    r\in (0,1) & {\frac{\sigma^{(j)*}_1(m_j)}{\sigma^{(j)*}_0(m_j)}}= \beta_{j},  \\
    0 & {\frac{\sigma^{(j)*}_1(m_j)}{\sigma^{(j)*}_0(m_j)}} < \beta_{j},
    \end{cases}
    \label{eq:delta_stage_j1}
\end{equation}
with 
\begin{equation}
    \beta_j =   \prod_{k=1}^{j-1}{\frac{\sigma^{(k)*}_0(m_k)}{\sigma^{(k)*}_1(m_k)}}\beta^j,\;\;\text{and}\;\beta^j = \frac{p(H_0)}{p(H_1)}\left(\frac{1}{\alpha_j}-1\right),     \label{eq:beta_j}
\end{equation}
where $\beta$ is the same threshold as in \eqref{eq:receiver_np}. 
We now study the attacker's behaviors. 

In the meantime, we thus obtain the attacker's equilibrium strategies at stage $j$ as 
\begin{align}
    \sigma^{(j)*}_0(m) &=\begin{cases}
    c^j_0f_0(m)  & m\in M^j_0, \\
    c^j_0 f_1^{\frac{\beta_j}{1+\beta_j}}(m)f_0^{\frac{1}{1+\beta_j}}(m)\zeta_j^{\frac{\beta_j}{1+\beta_j}}  & m\in M^j_*,\\
    c^j_0f_0(m) & m\in M^j_1;
    \end{cases} \label{eq:signal_equil_0_j} \\
    \sigma^{(j)*}_1(m) & = \begin{cases}
    c^j_1f_1(m) & m\in M^j_0,\\
    c^j_1f_1^{\frac{\beta_j}{1+\beta_j}}(m)f_0^{\frac{1}{1+\beta_j}}(m)\zeta_j^{-\frac{1}{1+\beta_j}} & m\in M^j_*, \\
    c^j_1f_1(m) e^{-\frac{1}{\lambda}} & m\in M^j_1;
    \end{cases}
\label{eq:signal_equil_1_j}
\end{align}
with the normalization factors guaranteeing that $\int_{M}{\sigma^{(j)}_0(m)dm}=1$ and $\int_{M}{\sigma^{(j)}_1(m)dm}=1$. The exact regions  $M^j_0,M^j_*,M^j_1$ (strict acceptance region, uncertainty region, strict rejection region) can be characterized in terms of $\zeta_j$:
$ M^j_0 = \left\{m: \frac{f_1(m)}{f_0(m)}<\frac{1}{\zeta_j}\right\},\;\;M^j_* = \left\{m:\frac{1}{\zeta_j}\leq \frac{f_1(m)}{f_0(m)}\leq e^{\frac{1}{\lambda}}\frac{1}{\zeta_j}\right\},\;M^j_1 = \left\{m: \frac{f_1(m)}{f_0(m)}>e^{\frac{1}{\lambda}}\frac{1}{\zeta_j}\right\}$.
\end{prop}

We leave the discussion on technical details in the appendix. 
\section{Case Study: Intrusion Detection Evasion}
\label{sec: numerical_experiment}
In this section, we conduct numerical experiments to evaluate the performance of evasion-aware detectors against intrusion detection evasion techniques \cite{cheng2011evasion} such as IMAP, backdoor denial-of-service (DoS), etc, that camouflage the true state of users.  {Intrusion detection problem enjoyed a long history of attention. Authors in \cite{sommer2010ML_intrusion_detection} examine the differences of implementing machine learning techniques in the intrusion detection domain and in other areas. Authors in \cite{durkota2017detecting_data_exfiltration} seek a sequential game framework to explore the differences between insider and outsider attackers in data exfiltration and propose algorithms approximating optimal countermeasures. Authors in \cite{lisy2014randomized_operating_point_adversarial} adopt a Nash game framework to argue an optimal point on ROC curve against a rational attacker in the intrusion detection problem. Authors in \cite{laszka2016threshold_IDS} aim at finding the optimal threshold for multiple intrusion detection systems against strategic attacks. Authors in \cite{dritsoula2017game_adversarial_classifiction} formulate an adversarial classification scenario where an attacker and a defender play a nonzero-sum game for data samples of class 1, while data of class 0 are random and exogenous and implement the frameworks on signature-based intrusion detection problem. In this paper, we demonstrate how a proactive intrusion detection system can identify the abnormalities of users under packets whose features are strategically modified by attackers to evade detection. }

\subsection{The Dataset}
 {We select a portion of the KDD Cup 1999 dataset that was built as part of the 1998 DARPA Intrusion Detection Evaluation Program, which was prepared and managed by MIT Lincoln Labs. The objective was to survey and evaluate research in intrusion detection. The 1999 KDD intrusion detection contest \cite{misc_kdd_cup_1999_data_130} uses a version of this dataset. Lincoln Labs set up an environment to acquire nine weeks of raw TCP dump data for a local-area network (LAN) simulating a typical U.S. Air Force LAN. They operated the LAN as if it were a true Air Force environment, but peppered it with multiple attacks, the main four categories of which are denial-of-service, unauthorized remote access, unauthorized access from a local superuser, and probing (such as port scan).  The raw (binary) data was about 4 gigabytes of compressed binary TCP dump data from 7 weeks of network traffic, which was later compressed into about five million connection records. A connection is a sequence of TCP packets starting and ending at some well-defined times, between which data flows to and from a source IP address to a target IP address under some well-defined protocol. Each connection record, consisting of 100 bytes, is labeled as either normal or as an attack, with exactly one specific attack type. }

 {There are in total $N= 25193$ pieces of connection records in our selected dataset, of which $11743$ are intrusions. We assume abnormal connections are initialed by abnormal users, and vice versa. Thus we can estimate a prior belief regarding the type of users as 
\begin{equation}
    p(H_1) = \frac{11743}{25193}\approx 0.466,\; p(H_0)  \approx 0.534.
\end{equation}}

 {In general, each connection record of TCP dump data contains many features such as source bytes, destination bytes, protocols, log-ins, server error rates, number of shells, etc. For convenience, we assume here that the detector determines the type of the user based on distributional information of the `log-in' feature, which is characterized as the random variable $\hat{m}$ that takes binary values: $\hat{m}=0$ if the log-in attempt is successful and $\hat{m}=1$ otherwise. Normal users and abnormal users have log-in attempts with failure rates of $\tha_0$ and $\tha_1$, respectively. Accordingly, the detector forms two hypotheses as follows:
\begin{equation}
H_0: \hat{m}\sim f_0(m),\;\;H_1: \hat{m}\sim f_1(m),
\end{equation}
with
\begin{equation}
\begin{aligned}
    f_1(1) = P[m=1|H_1] &= \tha_1,\; f_1(0) = P[m=0|H_1] = 1-\tha_1,\; \\ f_0(1) = P[m=1|H_0] &= \tha_0,\; f_0(0)= P[m=1|H_0] = 1-\tha_0.
    \label{hypo:binom}
\end{aligned}
\end{equation} }

 {Now we consider the proposed framework depicted in Figure \ref{fig:naive_detector}. An intrusion detection system (IDS, she) aims to identify whether a user (It) is normal $H_0$ or abnormal 
$H_1$ by analyzing the `log-in' extracted from the data packets obeying distributions specified in \eqref{hypo:binom}. We observe from the dataset that among the packets from abnormal users only $409$ out of $11743$ log-ins were successful, while from normal users $9536$ out of $13450$ were successful.  We thus estimate from the dataset that the failure rates under different hypotheses are $\tha_1 = 0.292,\tha_0 = 0.966$. The attacker intercepts data packets in the middle, falsifies the log-in features, and transmits the modified data packets to the IDS according to his equilibrium strategies as in \eqref{eq:signal_equil_0} and \eqref{eq:signal_equil_1}. The evasion-aware IDS makes decisions based on the attacker's message according to \eqref{eq:receiver_np}. }

 As another comparison, a robust IDS  \cite{fauss2016robust_band_hypo_testing} derives an optimal strategy $\hat{\delta}^*\in 2^M$ by solving the minimax problem: $$\underset{\hat{\delta}\in 2^M}{\min}\underset{(\hat{\sigma_1},\hat{\sigma}_0)\in \hat{A}_S}{\max}\;\beta \mE_{\hat{\sigma}_0}[\hat{\delta}(m)] +\; \mE_{\hat{\sigma}_1}[1- \hat{\delta}(m)],$$
with the uncertainty set $\hat{A}_S = \{(\hat{\sigma}_1,\hat{\sigma}_0):\;\underline{c}f_0\leq \hat{\sigma}_0\leq \bar{c}f_0,\;\underline{c}f_1\leq \hat{\sigma}_1\leq \bar{c}f_1,\;\underline{c}<1<\bar{c}\}$ and the parameter $\beta$ the same as in \eqref{eq:signal_equil_0} and \eqref{eq:signal_equil_1}. The optimal strategy $\delta^*$ applies a likelihood ratio test for `least favorable distributions (LFD)', denoted as $f^*_0,f^*_1$: that is, $\hat{\delta}^*(m) = 1$ if $f^*_1(m)>\beta f^*_0(m)$
and $\hat{\delta}^*(m) = 0$ otherwise. The robust IDS is also evasion-aware, taking into consideration the potential manipulation of messages by the attacker, yet not proactive, as it assumes that the attacker directly antagonizes her despite the user's type or the attacker's message. {To demonstrate the generalizability of the proposed approach, we also conduct another case study on the IDS dataset \cite{ids2018}, and the results are provided in the supplementary document associated with this paper.}

\begin{figure}
    \centering
\includegraphics[scale=0.35]{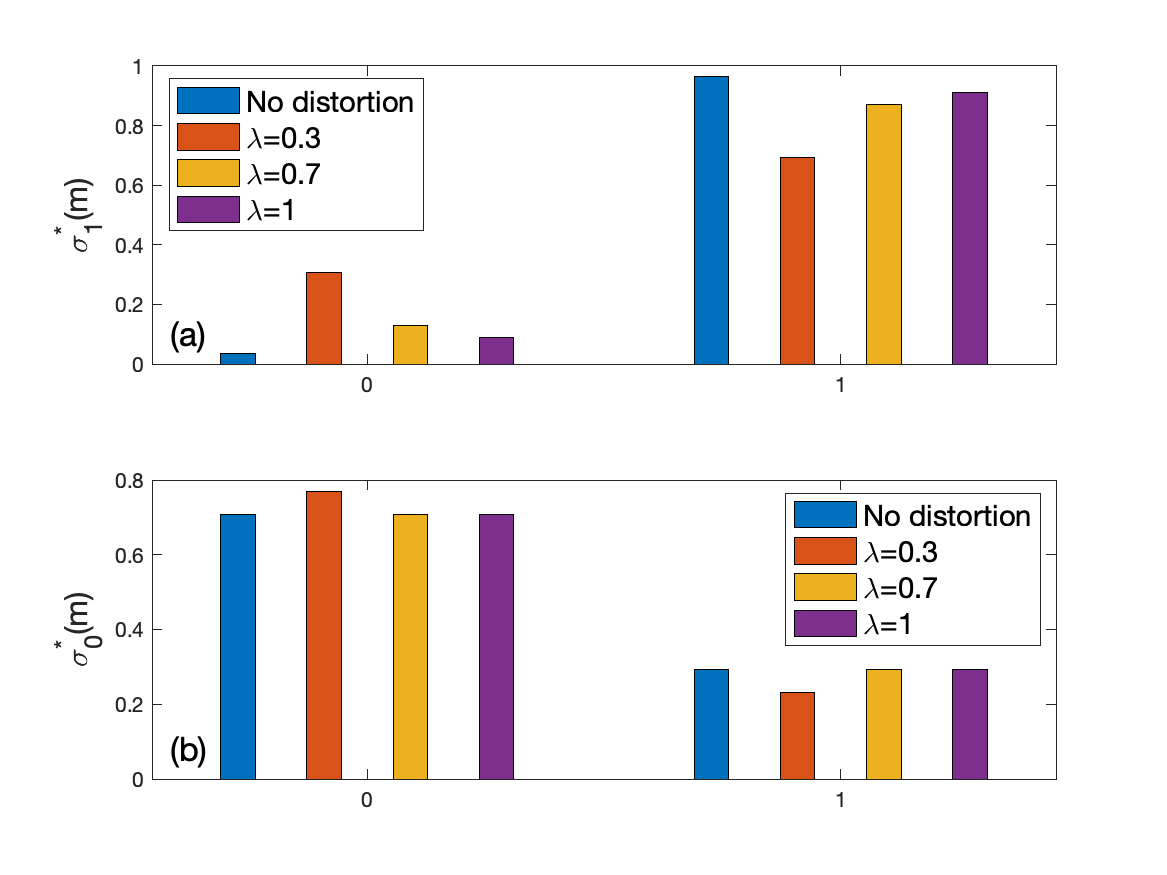}\\
    \includegraphics[scale=0.175]{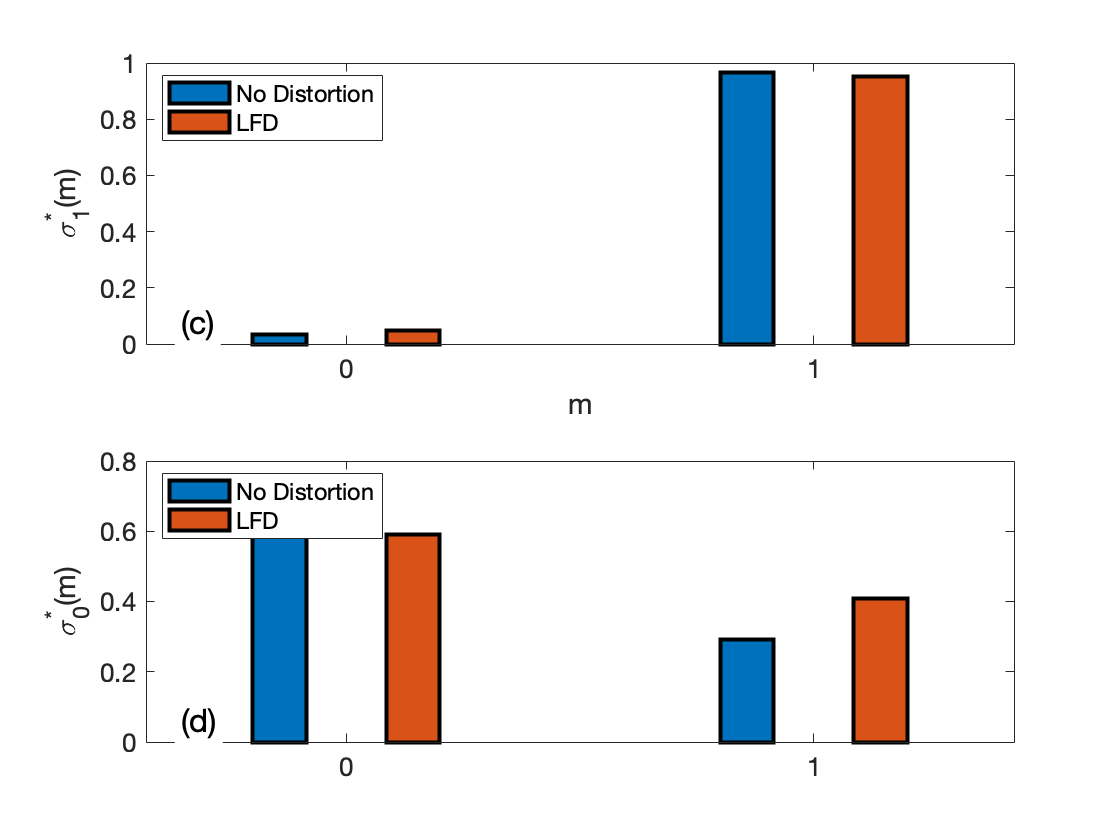}
    \caption{In (a)(b) we depict the attacker's equilibrium strategies \eqref{eq:signal_equil_0}, \eqref{eq:signal_equil_1} against an evasion-aware IDS; in (c)(d) we illustrate the attacker's equilibrium strategies against a robust IDS \cite{fauss2016robust_band_hypo_testing}.   }
    \label{fig:strategy_binomial}
\end{figure}
\begin{figure}
    \centering
    \includegraphics[scale=0.35]{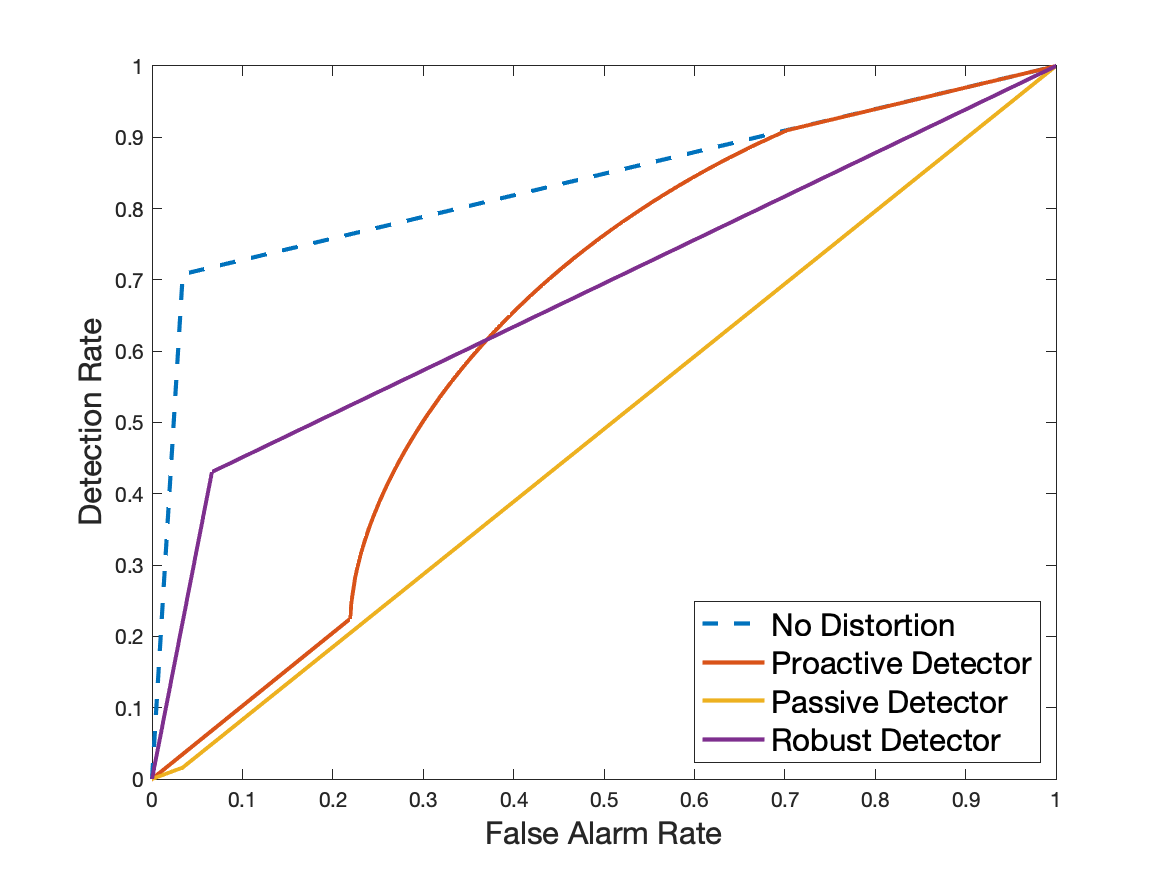} 
    \caption{EROC curves of a non-adversarial IDS, a non-strategic IDS, an evasion-aware IDS, and a robust IDS at equilibrium strategies. The true distributions under each hypothesis are both Bernoulli distributions specified in \eqref{hypo:binom} with parameters  $\tha_1 = 0.966$ and $\tha_0 = 0.292$. The parameter $\lambda=0.2$.}
    \label{fig:roc_bern}
\end{figure}
\begin{figure}
    \centering
     \includegraphics[scale=0.35]{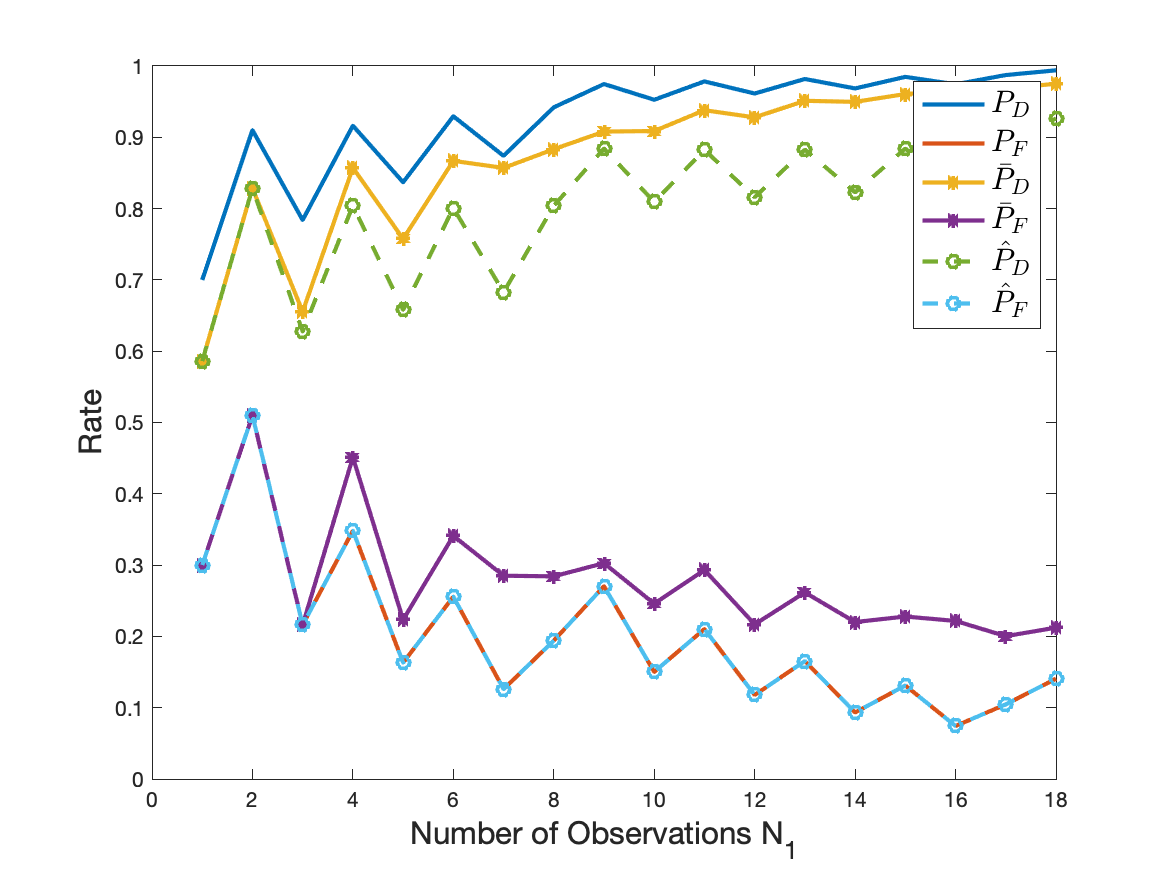}
    \caption{The detection rates and false alarm rates for passive, proactive, and non-adversarial IDS under $N_1=15$ observations with the initial threshold $\beta=0.9$. The nominal distributions under $H_1, H_0$ are Bernoulli distributions with parameter $\tha_1 =0.7,\tha_0=0.3$. The parameter $\lambda = 0.75$.}
    \label{fig:repeated_bern}
\end{figure}

\subsection{Numerical Results}

We set $\beta = 6.533$ and $\tha_1,\tha_0, f_1,f_0$ as introduced at the beginning of this section and depict in Fig. \ref{fig:strategy_binomial} the strategies of man-in-the-middle at the equilibria for evasion-aware detectors under different choices of $\lambda$. {We observe that as $\lambda$ decreases (i.e., the penalty for distorting messages is reduced), the attacker's equilibrium strategies increasingly favor a higher probability of selecting $m=0$ when the user is abnormal. Similarly, when the user is normal, the attacker's equilibrium strategies also tend to assign a higher probability to $m=1$, although this effect is less pronounced.} {In Fig. \ref{fig:strategy_binomial}(c)(d) we depict the attacker's equilibrium strategies against a robust detector.  We set $\underline{c}=0.5,\;\bar{c} =1.3$ and $\beta =6.533$ for a robust detector.  We observe that, when confronting a robust detector, the attacker tends to adjust the likelihood ratio to be closer to $1$.} The resulting equilibrium strategies of the attacker can help compute the detection rate $\bar{P}_D$ and the false alarm rate $\bar{P}_F$ indicated in \eqref{false_alarm_rate_signalling}.  In Fig. \ref{fig:roc_bern}, we depict the performances of an evasion-aware detector and other counterparts with different rational capacities using EROC curves. We observe that the evasive-aware IDS outperforms the passive IDS in the sense that the EROC curve of the evasive-aware IDS lies above other counterparts. {Additionally, also, the proactive IDS may outperform a robust IDS due to the differing behaviors of attackers when facing each type of detector. Against a robust IDS, attackers tend to make thbecause attackers behave differently when facing each different detectors: they make distributions nearly almost identical, while they against a robust IDS, but expand the uncertainty regions when facing against a proactive IDSone. As the false alarm rate increases, the detection rate of a robust IDS grows more slowly because samples with lower likelihood ratios—those, less likely to belong to the target distribution—are, get rejected. In contrast, the proactive IDS, which dynamically adjusadapts its threshold, may dynamically, experiences slower detection rate growth when distributions are heavily suppressed, but it can surpass the performance of the robust IDS in certain regimes.} {In Fig. \ref{fig:strategy_binomial}(c)(d) we depict the attacker's equilibrium strategies against a robust detector.  We set $\underline{c}=0.5,\;\bar{c} =1.3$ and $\beta =6.533$ for a robust detector.  We observe that, when confronting a robust detector, the attacker tends to adjust the likelihood ratio to be closer to 1.} The resulting equilibrium strategies of the attacker can help compute the detection rate $\bar{P}_D$ and the false alarm rate $\bar{P}_F$ indicated in \eqref{false_alarm_rate_signalling}.  In Fig. \ref{fig:roc_bern}, we depict the performances of an evasion-aware detector and other counterparts with different rational capacities using EROC curves. We observe that the evasive-aware IDS outperforms the passive IDS in the sense that the EROC curve of the evasive-aware IDS lies above other counterparts. {Additionally, a proactive IDS may outperform a robust IDS due to the differing behaviors of attackers when facing each type of detector. Against a robust IDS, attackers tend to make the distributions nearly identical, while they expand the uncertainty regions when facing a proactive IDS. As the false alarm rate increases, the detection rate of a robust IDS grows more slowly because samples with lower likelihood ratios—those less likely to belong to the target distribution—are rejected. In contrast, the proactive IDS, which dynamically adjusts its threshold, may experience slower detection rate growth when distributions are heavily suppressed, but it can surpass the performance of the robust IDS in certain regimes.} Asymptotically, as $\lambda$ increases, the attacker tends to distort the message less, and the detector's EROC curve grows more closely to the one without distortion (the blue curve in Fig. \ref{fig:roc_bern}) and vice versa. 

The detection rates and false alarm rates for the detectors under repeated observations are depicted in Fig. \ref{fig:repeated_bern}. We could observe that just like the non-adversarial IDS, both the passive and the proactive IDS can also reach a perfect prediction (detection rate converges to 1, while false alarm rate converges to 0) given more and more distorted messages, yet the convergence speeds of these detectors vary: compared to the non-adversarial IDS, the passive IDS converges more slowly on its detection rates, while the proactive IDS converges more slowly on its false alarm rates.

    \subsection{Multi-Stage Analysis}
  We further pursue and seek multi-stage analysis. The multi-stage analysis aim to show the advanced persistent threats (APT) \cite{chen2014study_advanced_persistent_threat} that have been receiving more and more attention over the years. 
    The temporal correlation of advanced persistent threats involves examining the timeline of the attack, identifying its stages and understanding how threat actors adapt and evolve their tactics over time. The temporal perspective is crucial for cybersecurity professionals to gain insights into the motivations, objectives of intrusion detection evasion attacks. 

    In Figure \ref{fig:multi_stage}, using the algorithm proposed in \cite{huang2019dynamic_bayesian_game}, we plot the attacker's equilibrium strategies over the stages of a message sequence. We observe that the attacker's state-dependent equilibrium strategies are asymptotically closer to the true distributions produced by users under each hypothesis no matter the choice of $\lambda$, suggesting that the proactive detector can find out the true type of the user and the attacker's mitigation will have less and less influence upon the detector's performance over the stages. Such results imply that the proposed proactive detector has the potential detecting and combatting persistent intrusion evasion attacks.

\begin{figure}
    \centering
\includegraphics[scale=0.17]{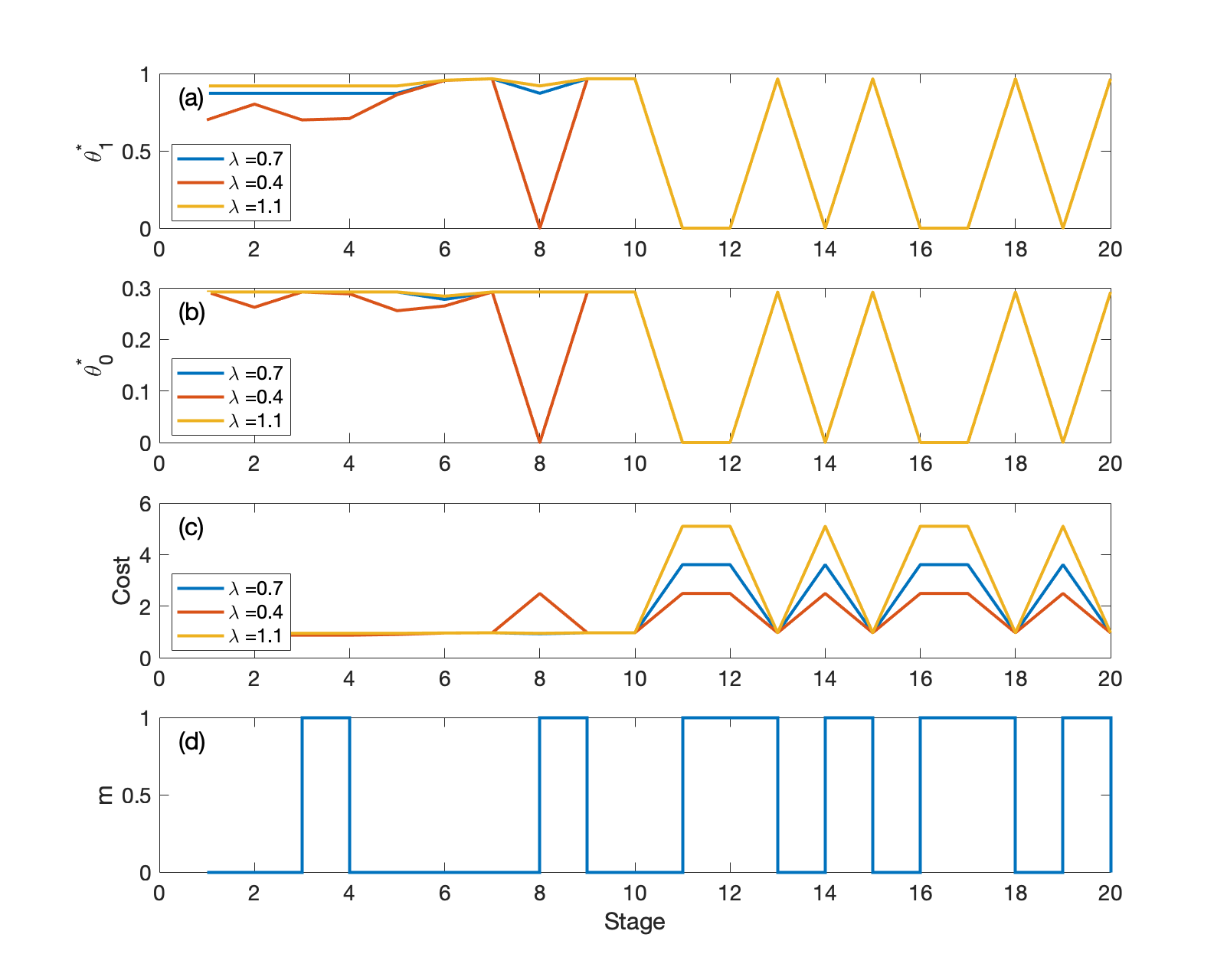}
    \caption{ {(a)(b)The evolution of attacker's equilibrium strategies $\sigma^*_1,\sigma^*_0$ (represented by $\tha^*_1,\tha^*_0$ with meaning in \eqref{hypo:binom}) in multi-stage game.  {(c) the attacker's cost function over the stages. (d) the message trajectory over the stages.} We set the time horizon $N_1=20$ and keep the initial setting of the signaling game as before.}}
    \label{fig:multi_stage}
\end{figure}

\section{Conclusion} 
\label{sec:conclusion}
We have consolidated the classical Neyman-Pearson detection method into the framework of sequential games to develop holistic theories of passive detectors and of proactive detectors to help develop evasion-aware detectors that cope with ever-developing stealthy attackers. Depending on different rational capacities for the detectors, we employ the frameworks of Stackelberg games and of signaling games to capture the behaviors of passive and of proactive detectors and depict the detectors' performances evaluated by EROC curves. Our results reveal the proactive detectors outperform the passive ones as the former ones can take advantage of the distorted message to infer the user's type so as to countermeasure the attacker's manipulation of messages.  

Future extensions may include designing a decentralized detection scheme for M-ary hypothesis testing similar to \cite{tsitsiklis1993decentralized}. The signaling game framework can also be combined with statistical estimation techniques, which also has implementations in security such as analyzing spoofing attacks \cite{d2021_spoof_attack}. 

\bibliographystyle{ieeetr}
\bibliography{rl}

 \appendix
 
\section{Appendix}
\subsection{Proof of Proposition \ref{prop:sol_signaling_lam_0}
}
\label{subsec:lam_0_sender}
\begin{proof}
When $\lambda = 0$ our assumption \eqref{assume:full_support} does not need to hold since there is no KL divergence term in the objective function. Thus we can assume the space of attacker's strategies $\sigma_1,\sigma_0$ are all probability distributions on the message space $M$ so the optimization problem \eqref{eq:potential} reduces to the following:
\begin{equation}
    \underset{\substack{(\sigma_1,\sigma_0)\in \bar{A}_S}}{\min}\;\int_{m\in M_1}{\sigma_1(m)dm},
\end{equation}
where $\sigma_1,\sigma_0$ are subjective to the constraints the same as \eqref{eq:potential}. When $\beta\geq 1$, we verify that all strategies $\sigma_1\leq \beta \sigma_0$ lead to the region of rejection $M_1$ to be an empty set, which means the value of the objective function is $0$. Because the objective function value cannot be negative (it means receiver's detection rate), all of such pairs of strategies are equilibrium strategies. 

When $0<\beta<1$, we can find the attacker's equilibrium strategies by consider best response dynamics. Denote $\text{BR}_0,\text{BR}_1: L^1(M)\rightarrow L^1(M)$ as the best responding strategies $\sigma_0,\sigma_1$ when the other is given. We can construct the following best responses:
\begin{align}
\text{BR}_0[\sigma_1] & = \begin{cases}
\frac{1}{\beta}\sigma^*_1(m) &\text{if}\; \sigma_1(m)>\tau,\;\int_{\sigma_1(m)>\tau}{\sigma_1(m) dm} = \beta, 
\label{br0}\\
0 & \mbox{otherwise}.
\end{cases} \\
\text{BR}_1[\sigma_0] \;&= \begin{cases}
\beta \sigma_0(m)& \text{if}\; \sigma_0(m)\geq 0, \\
(1-\beta)\varphi(m) & \mbox{otherwise}.
\end{cases}
\label{br1}
\end{align}
Now we can verify that the equilibrium \eqref{eq:sig_s_h1} and \eqref{eq:sig_s_h0} meet the requirement $\sigma_1^* = \text{BR}_0[\sigma^*_0],\;\sigma_0^* = \text{BR}_1[\sigma^*_1]$, which concludes the proof.
\end{proof}

\subsection{Proof of Proposition \ref{prop:sol_signaling_lam_infty}
}
\label{subsec:lam_infty_sender}
\begin{proof}
We prove the statement by construction. When $\lambda\rightarrow \infty$, the objective function in \eqref{eq:potential} reduces to the sum of KL divergence between $\sigma_1$ and $f_1$ and the KL divergence between $\sigma_0$ and $f_0$. By the non-negativity of KL divergences, we know both terms are non-negative and only reach zero when \eqref{eq: kl_divgence_min} holds,  which can be realized by setting $\sigma^*_1(m)=f_1(m),\;\sigma^*_0(m)=f_0(m)$. As a result, the optimal solutions for the problem \eqref{eq:potential} when $\lambda\rightarrow \infty$ is \eqref{eq: kl_divgence_min}. 
\end{proof}

\subsection{Proof of Theorem \ref{prop:finite_lam_signaling}}
\label{sec:sender_optimal_strategy_finite_lambda}

\begin{proof}
To solve \eqref{eq:potential}, we first introduce  $\sigma'_1,\sigma''_1,\sigma'_0,\sigma''_0: M\rightarrow \mR_+,$ as the restrictions of $\sigma_1,\sigma_0$ on $\bar{M}_0,\bar{M}_1$, respectively, where $\bar{M}_0 = \left\{m : \sigma_1(m)\leq \beta\sigma_0(m)\right\},\;\bar{M}_1 = \left\{m : \sigma_1(m)> \beta\sigma_0(m)\right\}$. 
Based on the decomposition we can convert \eqref{eq:potential} into the following:
\begin{equation}
\begin{aligned}
&\underset{\substack{\sigma'_0,\sigma'_1,\sigma''_0,\sigma''_1}}{\min}\int_{M}{\sigma''_1(m)dm}\\
 &+ \lambda \int_{M}{\sigma'_1(m)\ln\frac{\sigma'_1(m)}{f_1(m)} + \sigma''_1(m)\ln\frac{\sigma''_1(m)}{f_1(m)}dm}
  \\
  &+\lambda \int_{M}{\left(\sigma'_0(m)\ln\frac{\sigma'_0(m)}{f_0(m)} + \sigma''_0(m)\ln\frac{\sigma''_0(m)}{f_0(m)}\right)dm},
  \\
    &\text{s.t.}\;\int_{M}{(\sigma'_0(m) + \sigma''_0(m))dm} = 1, \\
    & \qquad \int_{M}{(\sigma'_1(m) + \sigma''_1(m))dm} = 1, \\
    & \qquad {\sigma''_1(m)}-\beta{\sigma''_0(m)}> 0,\;\forall m\in \bar{M}_1,\\
    &\qquad {\sigma'_1(m)}- \beta{\sigma'_0(m)}\leq 0,\;\forall m \in \bar{M}_0, \\
    &\qquad \sigma'_1(m),\sigma_1''(m),\sigma'_0(m),\sigma''_0(m)\geq 0,\;\forall m\in M.
    \label{opt:function1}
\end{aligned}
\end{equation}
where the support for $\sigma'_0,\sigma'_1,\sigma_0'',\sigma''_1\in L^1(M)$ satisfy $\supp \sigma'_1 = \supp \sigma'_0,\;\;\supp \sigma''_1=\supp \sigma''_0,\;\supp \sigma'_1  \cap \supp \sigma''_1 = \varnothing,\;\supp \sigma'_1  \cup \supp \sigma''_1 = M$.
By theories of equivalent optimization problems \cite{boyd2004convex_optimization}, we could see the optimal solutions $\sigma^*_0,\sigma^*_1$ of the problem \eqref{eq:potential} and the one of \eqref{opt:function1} can be connected using decomposition.

Solving the KKT conditions of the problem \eqref{opt:function1} we write $\sigma'_1,\sigma'_0,\sigma''_1,\sigma''_0$ in terms of $\rho_0,\rho_1$: $\forall m\in \bar{M}_0$,
\begin{align}
&{\sigma'^*_0}(m) = c_0f_0(m) e^{\frac{\beta\rho_0(m)}{\lambda}},\;{\sigma'^*_1}(m) =c_1f_1(m)e^{-\frac{\rho_0(m)}{\lambda}}; \\
&\forall m\in \bar{M}_1,\;\sigma''^*_0(m) = c_0f_0(m),\; \sigma''^*_1(m) =c_1 f_1(m)e^{-\frac{1}{\lambda}},
    \label{sigma1_m1}
\end{align}
where we introduce $c_0 = e^{-\gamma_0/\lambda},\;c_1 = e^{-\gamma_1/\lambda}$. We can then construct the optimal solution for \eqref{eq:potential} via $\sigma^*_1 = \sigma'^*_1 + \sigma''^*_1,\;\sigma^*_0 = \sigma'^*_0 + \sigma''^*_0$.

We know $\rho_0(m)\geq 0$ for every $m\in\bar{M}_0$. Denote $M_* = \{m:\;\rho_0(m)>0\}$ and $M_0 = \{m:\;\rho_0(m)=0\}$ as a partition of $M_0$.  For every $m\in {M}_*$, we have
\begin{equation}
    \sigma^*_1(m)-\beta\sigma^*_0(m)=0
    \label{eq:sigma_1_beta_sigma0}
\end{equation} by complementary slackness. Similarly we know for every $m\in M_0$, $\sigma^*_1(m)-\beta\sigma^*_0(m)>0$. Then, we have  
\begin{align}
    \sigma^*_0(m) & = \begin{cases}
    c_0f_0(m), &m\in M_0, \\
     c_0f_0(m)e^{\frac{\beta\rho_0(m)}{\lambda}}, &m\in M_*, \\
     c_0f_0(m), &m\in M_1.
    \end{cases}
    \label{eq:sigma0_final}\\
    \sigma^*_1(m) & = \begin{cases}
    c_1f_1(m), &m\in M_0, \\
     c_1f_1(m)e^{-\frac{\rho_0(m)}{\lambda}}, &m\in M_*, \\
     c_1f_1(m)e^{-\frac{1}{\lambda}}&m\in M_1,
    \end{cases}
    \label{eq:sigma1_final}
\end{align}
which leads to
\begin{align}
    {c_1}{f_1(m)}e^{-\frac{\rho_0(m)(1+\beta)}{\lambda}} &= {c_0}{\beta f_0(m)}.
    \label{eq:rho_equation}
\end{align}
Since the LHS of \eqref{eq:rho_equation} depends on $m$ explicitly and the RHS of \eqref{eq:rho_equation} does not, the derivative of the LHS of \eqref{eq:rho_equation} with respect to $m$ must vanish and we can express $\rho_0$ in terms of a constant $\zeta$ as follows:
  \begin{align}
     \rho_0(m) &= \frac{\lambda}{1+\beta}\left(\ln \left(\frac{f_1(m)}{f_0(m)}\right) + \ln \zeta \right),\;m\in M_*. \label{sol:rho0}
 \end{align}
 Substituting \eqref{sol:rho0} into \eqref{eq:rho_equation} gives us
  \begin{align}
      {c_1}  = {\beta}\zeta{c_0}.
      \label{a0a1c0c1}
  \end{align}
Combining \eqref{a0a1c0c1}, \eqref{eq:sigma0_final} and \eqref{eq:sigma1_final} and \eqref{eq:sigma_1_beta_sigma0}, we can find the implicit dependence of $\zeta$ on $\beta$ as indicated in \eqref{zeta_beta_1}, \eqref{zeta_beta_2}. By assumption \ref{lemma:function_J_zeta} we know such $\zeta$ exists. 
The final result follows from substituting \eqref{eq:rho_equation} in \eqref{eq:sigma0_final} and \eqref{eq:sigma1_final}.
\end{proof}

\subsection{Discussions on Multi-stage Game}
To prove proposition \ref{prop:s_pbne}, we first introduce some assumptions. The following assumption states that the only useful information accumulated throughout stages is the attacker's mixed strategies of messages. 
    \begin{assume}[Action-independent assumption]
At every stage $j\in[N_1]$, the proactive detector's optimal decision rule $\delta^{(j)*}(\cdot|h^{(j)})\in \bar{\Gamma}^{(j)}$, the attacker's optimal mixed strategies of generating manipulated messages $\sigma^{(j)*}_1(\cdot|h^{(j)}),\sigma^{(j)*}_0(\cdot|h^{(j)})\in \bar{A}^{(j)}_S $ and the posterior belief $p(\cdot|h^{(j)})$ depend only on the attacker's history of mixed strategies. Specifically, we have for $k\in\{0,1\}$,
\begin{align}
    {\bar{\delta}}^{(j)*}(m_j|h^{(j)}) &= {\bar{\delta}}^{(j)*}(m_j|m^{(j)}),
    \label{action_hisotry_dependent}\\
     \sigma^{(j)*}_k(m_j|h^{(j)}) &=     \sigma^{(j)*}_k(m_j|m^{(j)}),
     \label{msg_history_dependent}\\
    p(H_k|h^{(j)}) &=  p(H_k|m^{(j)}).\;
    \label{belief_system}
\end{align}
\end{assume}

 At the stage $j$, the proactive detector faces the following optimization problem:
\begin{equation}
\begin{aligned}
\underset{\bar{\delta}^{(j)}\in\bar{\Gamma}^{( j)}}{\sup}&\;\bar{\delta}^{(j)}(m^{(j)}){p(H_1|m^{(j)})}, \;\text{s.t.}&\bar{\delta}^{(j)}(m^{(j)}) p(H_0|m^{(j)}) \leq \alpha_j.
     \label{lagrangian_np}
\end{aligned}
\end{equation}    
Denote a generic decision rule at stage $j$ as $\bar{\delta}^{(j)}:M^{\otimes j}\rightarrow [0,1]$. Then the space of decision rules at the stage $j$, denoted as $\bar{\Gamma}^{(j)}$, can be written as follows:
\begin{equation}
    \bar{\Gamma}^{(j)} = \{\delta^{(j)}\in 2^{M^{\otimes j}}:\;\delta^{(j)}(m^{(j)})p(H_0|m^{(j)})\leq \alpha_j\}.
    \label{def:detector_multistage_strategy_j}
\end{equation}
    At every stage $j$ the attacker faces the following optimization problem:
\begin{equation}
\begin{aligned}
&\underset{\substack{(\sigma^{(j)}_1,\sigma^{(j)}_0)\in \bar{A}^{(j)}_S}}{\min}\; \int_{\delta^{(j)*}(m_j)=1}{\sigma^{(j)}_1(m_j)dm_j} \\
   &+ \lambda  \int_{ M}{\left({\sigma^{(j)}_1(m_j)\ln\frac{\sigma^{(j)}_1(m_j)}{f_1(m_j)}} + \sigma^{(j)}_0(m_j)\ln\frac{\sigma^{(j)}_0(m_j)}{f_0(m_j)}\right)dm_j}, 
    \label{eq:potential_N}
\end{aligned}
\end{equation}

     At the $j$-the stage, we can denote $\sigma^{*(j)}_1, \sigma^{*(j)}_0 \in \bar{A}^{(j)}_S$ as his generic mixed strategies under user's type $H_1, H_0$ respectively. 
 The attacker can aim at minimizing the sum (or average) of the cost function as in \eqref{eq:potential_N}. For proposition \ref{prop:s_pbne} we compute the subgame perfect Bayesian Nash equilibrium (s-PBNE): we assume that at any stage $j$, both the attacker and the proactive detector aim at optimizing their strategies as if the game starts at stage $j$. Proposition \ref{prop:s_pbne} follows by applying forward induction and the proof of theorem \ref{prop:finite_lam_signaling}.

\end{document}